\crefname{lemma}{Lem.}{Lem.}
\crefname{example}{Exmp.}{Exmp.}
\crefname{section}{Sect.}{Sect.}
\crefname{appendix}{Appx.}{Appx.}
\crefname{definition}{Def.}{Def.}
\crefname{theorem}{Thm.}{Thm.}
\crefname{proposition}{Prop.}{Prop.}
\crefname{corollary}{Cor.}{Cor.}
\crefname{problem}{Problem}{Problem}
\crefname{algorithm}{Alg.}{Alg.}
\crefname{equation}{Eq.}{Eq.}
\crefname{enumi}{}{}
\newcommand{\oomit}[1]{ }
\newcommand{\myparagraph}[1]{\medskip\noindent{\bf #1}}
\def\thanks#1{\protected@xdef\@thanks{\@thanks
        \protect\footnotetext{#1}}}
\begin{document}

\title{Quantifier Elimination Meets Treewidth}

\thanks{The first two authors marked with $\star$ contributed equally to this work and should be considered co-first authors. 
}

\authorrunning{H.~Wu et al.}

\author{Hao Wu$^\star$\inst{1,2}\orcidlink{0000-0001-9368-4744} \and
Jiyu Zhu$^\star$\inst{1,2}\orcidlink{0009-0001-1885-0674}  \and
Amir Kafshdar Goharshady\inst{3}\orcidlink{0000-0003-1702-6584} \and 
\\
Jie An\inst{4,2}\orcidlink{0000-0001-9260-9697} \and 
Bican Xia\inst{5}\orcidlink{0000-0002-2570-2338} \and
Naijun Zhan \inst{6,7}\inst{(}\Envelope\inst{)}\orcidlink{0000-0003-3298-3817}
}

\institute{
KLSS, Institute of Software Chinese Academy of Sciences, China
\and
University of Chinese Academy of Sciences, China
\and
University of Oxford, United Kingdom
\and 
National Key Laboratory of Space Integrated Information System, \\
Institute of Software Chinese Academy of Sciences, China
\and
School of Mathematical Sciences, Peking University, China 
\and
School of Computer Science, Peking University, China
\and
Zhongguancun Laboratory, China \\
\email{\{wuhao, zhujy\}@ios.ac.cn} \quad \email{amir.goharshady@cs.ox.ac.uk} \\
\email{anjie@iscas.ac.cn}
\quad \email{xbc@math.pku.edu.cn} 
\quad \email{njzhan@pku.edu.cn}
}
\maketitle

\setcounter{footnote}{0}

\begin{abstract}


In this paper, we address the complexity barrier inherent in Fourier-Motzkin elimination (FME) and cylindrical algebraic decomposition (CAD) when eliminating a block of (existential) quantifiers. 
To mitigate this, we propose exploiting structural sparsity in the variable dependency graph of quantified formulas. 
Utilizing tools from parameterized algorithms, we investigate the role of \emph{treewidth}, a parameter that measures the graph's tree-likeness, in the process of quantifier elimination.
A novel dynamic programming framework, structured over a tree decomposition of the dependency graph, is developed for applying FME and CAD, and is also extensible to general quantifier elimination procedures.
Crucially, we prove that when the treewidth is a constant, the framework achieves a significant exponential complexity improvement for both FME and CAD, reducing the worst-case complexity bound from doubly exponential to single exponential.
Preliminary experiments on sparse linear real arithmetic (LRA) and nonlinear real arithmetic (NRA) benchmarks confirm that our algorithm outperforms the existing popular heuristic-based approaches on instances exhibiting low treewidth.


\keywords{Quantifier Elimination \and Fourier-Motzkin Elimination \and Cylindrical Algebraic Decomposition \and Parameterized Algorithms \and Treewidth}
\end{abstract}

\section{Introduction}

Quantifier elimination (QE) is a fundamental technique in mathematical logic that transforms a first-order formula containing existential ($\exists$) and universal ($\forall$) quantifiers into an equivalent quantifier-free formula. 
A theory that admits a quantifier elimination procedure is highly desirable, as its decidability immediately reduces to the decidability of its quantifier-free fragment. 
This theoretical significance strongly motivates researchers in mathematics and computer science to investigate various theories that admit quantifier elimination procedures.
This paper primarily focuses on quantifier elimination within two central theories over the real numbers: \emph{linear real arithmetic} (LRA), where atomic formulas are defined by strictly linear constraints, and \emph{nonlinear real arithmetic} (NRA), which permits general polynomial constraints. 

For LRA, the \emph{Fourier-Motzkin elimination} (FME) algorithm is the first quantifier elimination procedure, originally proposed by Fourier~\cite{fourier1826-fme} and rediscovered by Motzkin~\cite{motzkin1936-fme}.
This algorithm can be understood as reducing a system of linear inequalities by removing variables one by one.
Its importance primarily stems from its geometric interpretation, which corresponds to the projection of a polyhedron (described by linear inequalities) onto lower-dimensional subspaces~\cite[Sect.~12.2]{schrijver1998book-lp}.
However, a significant challenge of applying the FME algorithm is its high worst-case complexity, which is doubly exponential in the number of quantified variables.
This high complexity results from the rapid, often redundant, proliferation of new constraints during the elimination steps~\cite{imbert90,imbert93ppcp,khachiyan09,jing20casc}. 
For eliminating a block of existential quantifiers, recent works~\cite{promies24fm,promies25lmcs} provide a divide-and-conquer approach to improve the complexity to single exponential. 

For NRA, the first quantifier elimination procedure was developed around the 1930s by Tarski in his seminal work~\cite{tarski1930}.
While Tarski's procedure is not elementary recursive, the first elementary recursive quantifier elimination procedure was developed by Collins in 1975~\cite{collins75-cad}, known as \emph{cylindrical algebraic decomposition} (CAD).
The CAD algorithm marks a milestone in real algebraic geometry and a comprehensive survey can be found in~\cite{caviness1998book-qe}.
Ever since its origin, the CAD algorithm, as well as its various variants~\cite{mccallum88jsc,hong90issac,collins91,brown01jsc,chen09issac,hong12jsc,brown15issac,han16jsc,strzebonski16jsc,bradford16jsc,han17jsc}, has remained the most authoritative procedure for NRA. 
It has found broad application in critical domains, including formal verification~\cite{kapur06jssc,rodriguez04issac,liu11emsoft,gan18tac,AnZLZY18}, control synthesis~\cite{dorato97jsc,jirstrand97jsc,hong97jsc}, and hybrid system analysis~\cite{anai01hscc,sturm11issac}.
Similar to the FME algorithm, the CAD algorithm also suffers from a worst-case doubly exponential complexity, but in the number of \emph{all} occurring variables~\cite{davenport88jsc}. 
Besides the complexity of internal algebraic operations, it is well known that the order of variables to be eliminated, called \emph{variable elimination ordering}, has a huge impact on the practical performance when eliminating a block of existential quantifiers.
Therefore, many heuristics based on sophisticated structural analysis
\cite{brown04issac,dolzmann04issac,bradford13cicm,li23jsc} and machine learning techniques have been applied to this task~\cite{huang14cicm,england19cicm,chen20icms,jia23nips}.

While existing works mostly focus on improving FME and CAD for general inputs, a natural but unexplored (up to our knowledge) problem is: \textbf{Can we design strategies for certain classes of inputs with better worst-case complexity upper bounds?}
Specifically, in this paper, we consider posing restrictions on the structural complexity of the dependency relationship of variables.
This relationship is typically represented by a graph, precisely the \emph{primal graph}, whose nodes correspond to variables of the input formula, and two variables are linked if and only if they both occur in an atomic formula.
To study this problem, we utilize tools coming from \emph{parameterized algorithms}.

The study of parameterized algorithms offers a fine-grained analytical framework for tackling computationally hard problems by introducing a secondary measurement, $k$, known as the parameter~\cite{downey12book-parameterized,cygan15book-parameterized}. 
For example, a problem is classified as \emph{fixed-parameter tractable} (FPT) if it admits an algorithm with a time complexity of $O(f(k) \cdot n^c)$, where $n$ is the input size, $f$ is a computable function depending solely on $k$, and $c$ is a constant independent of $n$.
In general, the goal of parameterized complexity is to conduct a finer-grained analysis of complexity by isolating the combinatorial explosion to a specific parameter.

When the input is a graph, \emph{treewidth} is one of the most important parameters to measure its inherent complexity.
Intuitively, treewidth quantifies how ``tree-like'' a graph is: a smaller treewidth value indicates a greater sparsity in the graph structure.
The main advantage of treewidth is that a bounded treewidth allows a vast collection of classical NP-hard problems to become fixed-parameter tractable~\cite{robertson86-minor2,bodlaender97mfcs,cygan15book-parameterized}. 
This tractability is fundamentally realized by employing \emph{dynamic programming} (DP) techniques executed over a tree structure related to the graph, called a \emph{tree decomposition}~\cite{bodlaender88icalp-dp}.
For constraint satisfiability problems, treewidth measures the structural complexity of variable dependencies and has been extensively studied~\cite{szeider03dam,marx10tc,fichte20cp,dong21stoc,gu22open-sdp,mallach25acta-ilp}.
However, the role of treewidth in the quantifier elimination of first-order logic theories is largely untouched, possibly due to the inherent high complexity of these procedures.

\myparagraph{Contributions.}
In this paper, we develop a novel framework to exploit the treewidth sparsity pattern in the process of applying quantifier elimination procedures to eliminate a block of (existential) quantifiers.
Here, the treewidth sparsity pattern means that the primal graph of the input formula has a small treewidth. 
The primary contributions include:
\begin{itemize}
    \item[$\bullet$] We propose a dynamic programming algorithm for applying FME and CAD to eliminate a block of quantifiers. It is executed over a tree decomposition of the formula's primal graph, where the treewidth of the graph determines the structure of this decomposition. We prove its correctness and demonstrate its extensibility to general quantifier elimination procedures. 
    \item[$\bullet$] We prove that, when the treewidth is a constant, the worst-time complexity of FME and CAD can be \emph{exponentially improved} from doubly exponential to single exponential in the number of \emph{eliminated} variables and \emph{all occurring} variables, respectively. The core idea is the utilization of a special tree decomposition structure, called \emph{balanced tree decompositions}~\cite{chatterjee18toplas}.
    \item[$\bullet$] We conduct experiments on randomly generated LRA and NRA benchmarks exhibiting sparse patterns. Experimental results indicate that our algorithm outperforms other popular heuristics on problem instances with low treewidth. 
\end{itemize}

Our starting point is a parameterized algorithmic perspective, a methodology that is orthogonal to most existing techniques. 
The most closely related work is \cite{li23jsc}, which exploits the chordal structure of the primal graph to guide the variable elimination ordering in CAD. 
Since chordal graphs and tree decompositions are two sides of one coin~\cite[Sec.~12.3]{diestel00book-graph}, our algorithm in~\cref{sec:algo} can be viewed as a dual version of theirs.
Nevertheless, utilizing treewidth offers significant algorithmic advantages. Specifically, it enables a dynamic programming framework that is more comprehensible and, by drawing upon concepts from parameterized complexity theory, a more fine-grained analysis of the complexity issues involved.

\myparagraph{Outline.}
The rest of this paper is organized as follows: 
\cref{sec:pre} introduces necessary concepts. 
\cref{sec:algo} presents the dynamic programming algorithm for FME and presents how to extend it to CAD and general quantifier elimination procedures.
\cref{sec:complexity} analyzes the complexity of our framework for FME and CAD under the assumption that the treewidth is a constant.
\cref{sec:exp} reports the experimental results and \cref{sec:summary} finally concludes the paper. 
\section{Preliminaries}
\label{sec:pre}

Let $\Real$ and $\Nat$ denote the set of real numbers and natural numbers, respectively.
A vector of $n$ real variables is denoted by $(x_1,\dots,x_n)\in \Real^n$, also written as $\seq{x}$ for short.
We denote by $\langle x_{\sigma(1)},\dots,x_{\sigma(n)}\rangle$ a linear ordering among these variables, where $\sigma$ is from the permutation group of size~$n$. 
For example, $\langle x_1,\dots,x_n \rangle$ denotes the natural order from $x_1$ to $x_n$. 
We assume readers are familiar with first-order logic and recommend the book~\cite{bradley07book-computation} for reference. For a set $C$ of atomic formulas, we denote by $\bigwedge C$ the conjunction of all atomic formulas in $C$.

\subsection{LRA, NRA, and Quantifier Elimination}
\label{ssec:logic}

\myparagraph{Quantifier Elimination (QE).} 
We say a first-order theory $\mathcal{T}$ admits \emph{quantifier elimination} if there exists an algorithm, called a quantifier elimination procedure, that transforms a given quantified $\mathcal{T}$-formula into an equivalent formula without quantifiers. 
Formally, w.l.o.g., we consider the input formula~$\Phi$ to be a conjunction of atomic formulas that is existentially quantified:
\begin{equation}\label{eq:input}
    \Phi \defeq \exists x_m,\dots,\exists x_1.~\bigwedge_i \varphi_i(x_1,\dots,x_n),
\end{equation}
where $m,n\in \Nat$, $m\le n$, and each $\varphi_i$ is an atomic $\mathcal{T}$-formula. 
The output of the quantifier elimination procedure is a quantifier-free formula in variables $x_{m+1},\dots,x_n$ that is equivalent to $\Phi$. 
For formulas with quantifier alternations (e.g., $\forall x_2\exists x_1$), 
we eliminate quantifiers from inside out, eliminating a block of quantifiers of the same type at each step.
In what follows, we primarily focus on the FME algorithm for LRA and on the CAD algorithm for NRA.


\myparagraph{Linear Real Arithmetic (LRA).}
LRA refers to the first-order theory with signature $\{0,1,+,<\}$ and domain $\Real$.
Given variables $(x_1,\dots,x_n)\in \Real^n$ and real constants $(a_1,\dots,a_n,b)\in \Real^{n+1}$, an LRA atomic formula is of the form 
$a_1x_1+\dots + a_n x_n - b\bowtie 0,$
where the relation symbol $\bowtie\ \in\{<,>,=,\le, \ge\}$.
The semantics of LRA is interpreted in the standard way. 

\myparagraph{Fourier-Motzkin Elimination (FME).} 
The FME algorithm takes a pair $(C, x_r)$ as input, where $C$ is a set of LRA atomic formulas and $x_r$ is a variable occurring in these constraints, and proceeds one of the following two steps:
\begin{itemize}
    \item[(I)] If $x_r$ appears in an equality constraint of the form $\sum_{j=1}^n a_{i,j}\cdot x_j - b_i = 0$ with $a_{i,r}\neq 0$, the procedure removes this constraint from $C$ and outputs $C$ by replacing every occurrence of $x_r$ with  
        $\frac{b_i}{a_{i,r}} - \sum_{j=1}^{r-1}\frac{a_{i,j}}{a_{i,r}} \cdot x_j- \sum_{j=r+1}^{n}\frac{a_{i,j}}{a_{i,r}} \cdot x_j$. 
    \item[(II)] If $x_r$ only appears in inequality constraints of the form $\sum_{j=1}^n a_{i,j}\cdot x_j - b_i \bowtie 0$, where $\bowtie$ is not $=$ and $a_{i,r}\neq 0$, for each of such constraints we derive a term $\frac{b_i}{a_{i,r}} - \sum_{j=1}^{r-1}\frac{a_{i,j}}{a_{i,r}} \cdot x_j- \sum_{j=r+1}^{n}\frac{a_{i,j}}{a_{i,r}} \cdot x_j$ that is called a \emph{bound} of $x_r$.
    Moreover, the bound is called an upper bound if $a_{i,r}>0$ and a lower bound if $a_{i,r}<0$. In either case, it is called strict if $\bowtie\ \in \{<,>\}$.
    Let $U_r$ and $L_r$ denote the set of upper bounds and lower bounds of $x_r$, respectively.
    Then, the procedure outputs the set of constraints 
    $\set{l \lhd u\given l\in L_r, u\in U_r}$, where $\lhd$ is $<$ if both $u$ and $l$ are strict and is $\le$ otherwise.
    
\end{itemize}
We denote the output by $\textsf{FME}(C, x_r)$, which is a set of atomic formulas without $x_r$.
If $C$ is taken to be the set of atomic formulas of an LRA formula $\Phi$ of the form \cref{eq:input}, then $\Phi$ is equivalent to the following quantifier-free formula 
    $\bigwedge \textsf{FME}(C, \langle x_1,\dots, x_m\rangle)$,
where $\textsf{FME}(C, \langle x_1,\dots, x_m\rangle)$ denotes the output of recursively applying $\textsf{FME}$ procedure on $C$ w.r.t. the ordering from $x_1$ to $x_m$.
For a more detailed description of the algorithm, please refer to \cite [Sec.~5.4]{daniel08book}.

\myparagraph{Nonlinear Real Arithmetic (NRA).}
NRA refers to the first-order theory with signature $\{0,1,+,\cdot,<\}$ and domain $\Real$.
A monomial is a product of powers of variables with nonnegative integer exponents, denoted by $\seq{x}^\seq{\alpha}=x_1^{\alpha_1}\dots x_n^{\alpha_n}$ for some $\alpha\in \Nat^n$. 
A polynomial is a finite summation of monomials $\sum_{\alpha} c_{\alpha} \seq{x}^\seq{\alpha}$, where $c_{\alpha}\in \Real$ are called coefficients. 
The set of polynomials in variables $\seq{x}$ with real coefficients is denoted by $\Real[\seq{x}]$.
An NRA atomic formula is of the form $f(\seq{x}) \bowtie 0$, where $f(\seq{x})\in \Real[\seq{x}]$ and $\bowtie\ \in \{<,>,=,\le, \ge\}$.
It is clear that an LRA constraint is a special case of an NRA constraint where the polynomial $f$ is linear, i.e., of degree $1$.
The semantics of NRA are also interpreted in the standard way. 

\myparagraph{Cylindrical Algebraic Decomposition (CAD).}
Given an NRA formula $\Phi$ of the form \cref{eq:input} and letting $P\subset \Real[x_1,\dots,x_n]$ denote the set of polynomials occurring in~$\Phi$,
the key idea underlying CAD is to decompose the state space $\Real^n$ into a finite number of regions called \emph{cells}.
This decomposition satisfies two primary properties: (i) in each cell each polynomial in $P$ remains a constant sign, and (ii) the boundary of each cell can be derived from polynomials in $P$.
Due to the first property, the sign of a polynomial in $P$ over a cell can be determined by a single sample point within that cell. Consequently, a finite set of sample points from the decomposition is sufficient to determine in which cells the conjunction of atomic formulas in $\Phi$ evaluates to true.
Finally, by utilizing the formulas that define these cells, we can construct a quantifier-free formula equivalent to $\Phi$.  

The CAD algorithm consists of two phases, \emph{projection} and \emph{lifting}, while in this paper, \emph{we solely focus on the projection phase.} 
The projection phase takes a variable ordering, say $\langle x_1, x_2 \dots, x_n\rangle$, where quantified variables precede free variables, and constructs a sequence of polynomial sets 
\begin{equation}
    \{P_1(=P), P_{2}, \dots, P_n\} \text{ with } P_i \subset \Real[x_i,\dots,x_n]   
\end{equation}
through the iterative application of a projection operator $\proj$.
Each subsequent set $P_{i+1} = \proj(P_i, x_i)$ is defined in a lower-dimensional space by removing the variable $x_{i}$ and carries enough information about $P_{i}$. 
There exist several different definitions of the projection operator, and in our analysis, we use McCallum's projection operator~\cite{mccallum88jsc}.
After the projection phase, the lifting phase then constructs the sign-invariant cells from 1-dimension to higher dimensions.
We provide a more detailed description of the CAD algorithm~\cite{arnon84jsc,jirstrand95,basu06book-rag} in~\cref{app:cad}.
which may not affect the understanding of this paper. 

\subsection{Treewidth and Tree Decomposition}
\label{ssec:graph}

In this part, we introduce basic concepts related to the sparsity of graphs. We recommend referring to \cite[Ch.~7]{cygan15book-parameterized} for an in-depth investigation of this topic.

\myparagraph{Graph and Primal Graph.}
We use $G=(V,E)$ to denote an undirected graph, where $V$ is a finite set of vertices and $E\subset V\times V$ is the set of edges.
The size of a graph, denoted $|G|$, is the cardinality of $V$.
A clique in a graph is a subset of mutually adjacent vertices.
Given a formula $\Phi$ in the form of \cref{eq:input}, the \emph{primal graph} $G_\Phi=(V_\Phi, E_\Phi)$ is a graph associated to $\Phi$ such that $V_{\Phi}=\{x_1,\dots,x_m\}$ is the collection of \emph{(existentially) quantified variables} and $(x_{j_1},x_{j_2})\in E_{\Phi}$ if and only if both $a_{i,j_1}\neq 0$ and $a_{i,j_2}\neq 0$ for some index $i$ of atomic formulas, i.e., $x_{j_1},x_{j_2}$ occur simultaneously in an atomic formula.
Throughout this paper, we will always assume the primal graphs are connected; otherwise, each connected component can be processed separately.

\begin{figure}[t!]
\vspace*{-2mm}
    \begin{minipage}{0.5\textwidth} 
        \centering
            \begin{align*}
                \Phi &\defeq \exists x_1,\dots, x_8.\\
                &
                \begin{pmatrix}
                \begin{aligned}
                    &~x_1 + 2 x_2 +3 x_3 \le 20 \\
                    \wedge &~x_1 - x_2 + 2 x_3 \ge -5\\
                    \wedge &~x_1 - 4 x_2 \le 0 \\
                    \wedge &~\dots \text{(constraints without $x_1$)}
                \end{aligned}   
                \end{pmatrix}
            \end{align*}
        \vspace*{-4mm} 
        \caption{An LRA formula $\Phi$}
        \label{fig:formula}
        \vspace*{2mm} 
        \par
        \resizebox{0.8\textwidth}{!}{
            \begin{tikzpicture}[node distance={15mm}, thick, main/.style = {draw, circle}] 
                \node[main] (1) at (0,1) {$x_1$}; 
                \node[main] (2) [below of=1] {$x_2$}; 
                \node[main] (3) [right of=1] {$x_3$}; 
                \node[main] (4) [right of=2] {$x_4$}; 
                \node[main] (5) [below of=2] {$x_5$}; 
                \node[main] (6) [right of=5] {$x_6$};
                \node[main] (7) [right of=6] {$x_7$};
                \node[main] (8) [left  of=5] {$x_8$};
                \draw (1) -- (3);
                \draw (1) -- (2);
                \draw (2) -- (3);
                \draw (2) -- (4);
                \draw (3) -- (4);
                \draw (2) -- (5);
                \draw (4) -- (5);
                \draw (4) -- (6);
                \draw (5) -- (6);
                \draw (6) -- (7);
                \draw (8) -- (2);
                \draw (8) -- (5);
            \end{tikzpicture} 
        }
        \captionof{figure}{The primal graph of $\Phi$}
        \label{fig:primal}
        \par
        \vspace*{2mm}
        \resizebox{0.55\textwidth}{!}{
            \begin{tikzpicture}[node distance={15mm}, thick, main/.style = {draw, rectangle}] 
            \node[main] (245) [fill = gray!40!white]{$2,4,5$}; 
            \node[main] (258) [below of=245, fill = gray!40!white] {$2,5,8$}; 
            \node[main] (234) [left  of=258, fill = gray!40!white] {$2,3,4$}; 
            \node[main] (123) [below of=234, fill = gray!40!white] {$1,2,3$};
            \node[main] (456) [right of=258, fill = gray!40!white] {$4,5,6$};
            \node[main] (67)  [below of=456, fill = gray!40!white] {$~6,7~$};
            \draw (245) -- (258);
            \draw (245) -- (234);
            \draw (245) -- (456);
            \draw (123) -- (234);
            \draw (456) -- (67);
            \end{tikzpicture} 
        }
            \captionof{figure}{A tree decomposition of {\cref{fig:primal}} }
            \label{fig:td}
        \end{minipage}
    \hfill
    \begin{minipage}{0.47\textwidth}
    \vspace*{4.5mm}
        \centering
        \resizebox{0.6\textwidth}{!}{
        \begin{tikzpicture}[node distance={10mm}, thick, main/.style = {draw, rectangle}] 
        \node[main] (root){$\emptyset$}; 
        \node[main] (2)   [below of=root]{$~~2~~$}; 
        \node[main] (24)  [below of=2]   {$2,4$}; 
        \node[main] (245) [below of=24, fill = gray!40!white]  {$2,4,5$}; 
        \node[main] (245-2) [below of =245]  {$2,4,5$}; 
        \node[main] (245-1) [left = 5mm of 245-2]   {$2,4,5$}; 
    
        \node[main] (24-2) [below of=245-1]   {$2,4$}; 
        \node[main] (234) [below of=24-2, fill = gray!40!white]   {$2,3,4$}; 
        \node[main] (23) [below of=234]   {$2,3$}; 
        \node[main] (123) [below of=23, fill = gray!40!white]   {$1,2,3$}; 
        \node[main] (12) [below of=123]   {$1,2$}; 
        \node[main] (1) [below of=12]   {$1$}; 
        
        \node[main] (245-3) [below of=245-2]   {$2,4,5$}; 
        \node[main] (245-4) [right=5mm of 245-3]   {$2,4,5$}; 
        \node[main] (25) [below of=245-3]   {$2,5$}; 
        \node[main] (258) [below of=25, fill = gray!40!white]   {$2,5,8$}; 
        \node[main] (28) [below of=258]   {$2,8$}; 
        \node[main] (8) [below of=28]   {$8$}; 
    
        \node[main] (45)  [below of=245-4]   {$4,5$}; 
        \node[main] (456) [below of=45, fill = gray!40!white]   {$4,5,6$}; 
        \node[main] (56)  [below of=456]   {$5,6$}; 
        \node[main] (6)   [below of=56]   {$6$}; 
        \node[main] (67)  [below of=6, fill = gray!40!white]   {$6,7$}; 
        \node[main] (7)   [below of=67]   {$7$}; 
        
        \draw (root) -- (2);
        \draw (2) -- (24);
        \draw (24) -- (245);
        \draw (245) -- (245-1);
        \draw (245) -- (245-2);
        \draw (245-1) -- (24-2);
        \draw (24-2) -- (234);
        \draw (234) -- (23);
        \draw (23) -- (123);
        \draw (123) -- (12);
        \draw (12) -- (1);
    
        \draw (245-2) -- (245-3);
        \draw (245-3) -- (25);
        \draw (25) -- (258);
        \draw (258) -- (28);
        \draw (28) -- (8);
    
        \draw (245-2) -- (245-4);
        \draw (245-4) -- (45);
        \draw (45) -- (456);
        \draw (456) -- (56);
        \draw (56) -- (6);
        \draw (6) -- (67);
        \draw (7) -- (67);
        
        \end{tikzpicture} 
        }
        \captionof{figure}{The nice tree decomposition of {\cref{fig:td}}. Gray nodes correspond to the nodes in {\cref{fig:td}}}
        \label{fig:nice-td}
    \end{minipage}
    \vspace*{-2mm}
\end{figure}

\begin{example}[Running Example]\label{ex:running}
Consider an LRA formula~$\Phi$ of the form~\cref{eq:input} and contains existentially quantified variables~$(x_1,\dots, x_8)$.
The portion of $\Phi$ related to $x_1$ is displayed in \cref{fig:formula}; the complete formula containing 20 atomic formulas is in the full version~\cref{app:example}.
The atomic formulas in $\Phi$ display a certain ``sparsity'' pattern---each variable only occurs in a small portion of atomic formulas.
For example, variable~$x_1$ only occurs in three atomic formulas, namely $x_1 + 2 x_2 +3 x_3 \le 20$, $x_1 - x_2 + 2 x_3 \ge -5$, and $x_1 - 4 x_2 \le 0$.
The primal graph of $\Phi$ is shown in \cref{fig:primal}.
\end{example}

\myparagraph{Tree.} 
A tree $T=(V,E)$ is an undirected graph that contains no cycles, i.e., there is only one unique path between any two vertices.
For clarity, the vertices of a tree will be called \emph{nodes}.
We always assume that a tree under consideration is \emph{rooted}, i.e., a node $r\in V$ is designated as the root.
The \emph{level} of a node $v\in V$, denoted $\textsf{Lv}(v)$, is the length of the simple path from $r$ to $v$. 
The \emph{height} of a tree is the maximum level among all nodes.
The \emph{parent} of a node $v$ is the node connected to $v$ on the path to the root, and a child of a node $v$ is a node of which $v$ is the parent.
A \emph{leaf} is a node with no children.
A tree is called \emph{binary} if each node has at most two children.
For each node $v$ in $T$, we denote by $T_v$ the subtree rooted at $v$; the nodes in $T_v$ (excluding $v$ itself) are called the \emph{descendants} of $v$.

\myparagraph{Tree Decomposition and Treewidth~\cite{halin76-sfunction,robertson84-minor3,robertson86-minor2}.} 
Given an undirected graph $G=(V,E)$, a \emph{tree decomposition} of $G$ is a tree $T=(\mathcal{B}, E_T)$ satisfying the following conditions:
(1) Each node $b\in \mathcal{B}$ of $T$, also called a \emph{bag}, is associated to a subset of the vertex set $V_b \subseteq V$;
(2) The bags cover the entire vertex set $V$ of $G$, i.e., $\bigcup_{b\in\mathcal{B}} V_b = V$;
(3) For every edge $e=(u,v)\in E$, there exists a bag $b$ such that both $u\in V_b$ and $v\in V_b$; and
(4) For every three bags $b_1,b_2,b_3\in \mathcal{B}$, if $b_3$ is on the path from $b_1$ to $b_2$ in $T$, then $V_{b_1}\cap V_{b_2} \subseteq V_{b_3}$ (equivalently, every vertex $v\in V$ appears in a connected subtree of $T$). 



The \emph{width} of a tree decomposition $T$, denoted by $w(T)$, is defined to be the size of the largest bag minus one, i.e.,  $w(T)=\max_{b\in \mathcal{B}} |V_b|-1$. 
The \emph{treewidth} of the graph $G$, denoted by $tw(G)$, is defined to be the minimum width among all possible tree decompositions of $G$. 
A tree decomposition with width $tw(G)$ is called \emph{optimal}. 
For example, \cref{fig:td} presents an optimal tree decomposition of the primal graph in \cref{ex:running} of width $2$.
The numbers in the labels refer to the variable indexes, for instance, the node labeled~$\{1,2,3\}$ in~\cref{fig:td} represents the bag~$\{x_1, x_2, x_3\}$.
For a primal graph $G_{\Phi}$, since every clique is contained in at least one bag in its tree decomposition~\cite{cygan15book-parameterized}, the largest number of quantified variables in an atomic formula gives a lower bound for $tw(G_\Phi)$. 

A smaller treewidth indicates a greater sparsity. 
In general, it is NP-complete to determine whether a given graph $G$ has treewidth at most a given variable~$k$~\cite{stefan87jadm}. 
However, Bodlaender showed that
when $k$ is fixed, there exists a linear-time algorithm to test whether $G$ has treewidth at most $k$, and if so, to construct an optimal tree-decomposition with $O(n)$ nodes~\cite{bodlaender96joc}.
In this paper, we rely on an external algorithm~\cite{tamaki19jco-td} to compute tree decompositions.
There are also many other well-optimized tools, see \cite{dell17ipec-pace} for a survey.

\myparagraph{Nice Tree Decomposition~\cite{kloks94book-treewidth}.}
A tree decomposition $T=(\mathcal{B}, E_T)$ of a graph $G$ is \emph{nice} if $T$ is rooted at a bag $r\in \mathcal{B}$ with $V_r=\emptyset$ and each bag of $T$ is of one of the following four types:
(1) \emph{leaf} bag is a bag $b$ having no children and $|V_b|=1$;
(2) A \emph{join} bag is a bag $b$ having exactly two children $b_1$ and $b_2$ such that $V_b=V_{b_1}=V_{b_2}$;
(3) An \emph{introduce} bag is a bag $b$ having exactly one child $b'$ such that $|V_{b}|=|V_{b'}|+1$ and $V_{b'}\subseteq V_{b}$;
(4) A \emph{forget} bag is a bag $b$ having exactly one child $b'$ such that $|V_{b}|=|V_{b'}|-1$ and $V_{b}\subseteq V_{b'}$.
As an example, \cref{fig:nice-td} gives a nice tree decomposition based on the tree decomposition in \cref{fig:td}. 


The concept of nice tree decomposition is introduced as a canonical form for representing tree decompositions.
When the treewidth is a constant, a tree decomposition can be transformed into a nice tree decomposition in linear time, with the same width but a linear increase in size~\cite[Lem.~7.4]{cygan15book-parameterized}, 
which helps design dynamic programming algorithms and conduct computational analysis. 

\section{Exploiting Treewidth Sparsity in Quantifier Elimination}
\label{sec:algo}

Now we show how to exploit the treewidth sparsity of the primal graph in quantifier elimination. 
In \cref{ssec:fme}, we explain our main idea using \cref{ex:running} and present our dynamic programming framework for FME based on tree decomposition. 
Then we extend the framework to CAD and the general quantifier elimination procedures in \cref{ssec:extension}. 
The formal analysis of the complexity improvement brought by the treewidth sparsity will be presented in \cref{sec:complexity}.

\subsection{A Dynamic Programming Framework for FME}\label{ssec:fme}


To exploit the sparsity pattern within the primal graph, we begin by showing that the FME algorithm can be understood as a vertex elimination operation.
Let us consider applying the FME algorithm to our running example \cref{ex:running} to eliminate the variable $x_1$.
We first rewrite the three atomic formulas containing $x_1$ into the following forms, separating $x_1$ from other terms:
$x_1 \le 20 - 2 x_2 - 3 x_3 \wedge -5+x_2-2x_3 \le x_1 \wedge x_1 \le 4x_2$,
which give upper bounds and lower bounds for $x_1$. 
Then, the FME algorithm removes $x_1$ by requiring that each lower bound is less than each upper bound. 
This results in two atomic formulas
$-5+x_2-2x_3 \le 20 - 2 x_2 - 3 x_3 \wedge -5+x_2-2x_3 \le 4x_2$,
which only involve $x_2$ and $x_3$.
Here, a straightforward yet important observation is that the FME algorithm is performed \emph{locally}---when a variable is eliminated (e.g., $x_1$), the variables in the newly introduced atomic formulas will be exactly its neighbors in the primal graph (e.g., $x_2$ and $x_3$).
In other words, eliminating variables from a formula corresponds to removing vertices from the primal graph, accompanied by the introduction of new atomic formulas about the neighbor.



Using tree decompositions, we can re-explain the process of eliminating $x_1$ in a tree-like manner. 
Consider the nice tree decomposition presented in \cref{fig:nice-td},
the process of eliminating $x_1$ consists of two steps.
In the first step, we assign each atomic formula of $\Phi$ to a bag in the nice tree decomposition depending on the involved variables. 
This is described by the initialization of a function $\mathcal{I}$ that maps each bag to a set of atomic formulas.
For example, for the three bags located at the left-most branch of the nice tree decomposition in~\cref{fig:nice-td}, we set
$\mathcal{I}(\{x_1\}) \gets \{\}$ (meaning there are no atomic formulas containing only $x_1$),
$\mathcal{I}(\{x_1,x_2\}) \gets \{x_1-4x_2\le 0\}$,
and $\mathcal{I}(\{x_1,x_2,x_3\}) \gets \{x_1-x_2+2x_3\ge -5, x_1 + 2 x_2 +3 x_3 \le 20\}$.
Note that the atomic formula $x_1-4x_2\le 0$ can also be assigned to $\mathcal{I}(\{x_1,x_2,x_3\})$ without affecting the result.

In the second step, we propagate the atomic formulas stored in $\mathcal{I}$ from bottom up, by recursively computing another function $\mathcal{V}$. For example, for the leaf bag $\{x_1\}$, we set $\mathcal{V}(\{x_1\})=\mathcal{I}(\{x_1\})=\{\}$. 
For the introduce  bag $\{x_1,x_2\}$, we set $\mathcal{V}(\{x_1,x_2\}) \gets \mathcal{I}(\{x_1,x_2\})\cup \mathcal{V}(\{x_1\})=\{x_1-4x_2\le 0\}$.
The computation of $\mathcal{V}(\{x_1,x_2,x_3\})$ is similar.
When the forget bag $\{x_2, x_3\}$ is reached, we set $\mathcal{V}(\{x_2,x_3\}) \gets \mathcal{I}(\{x_2,x_3\}) \cup \textsf{FME}(\bigwedge \mathcal{V}(\{x_1,x_2,x_3\}), x_1)$, which corresponds an application of the FME algorithm w.r.t. the variable removed at this bag.


By repeating the above operations, the execution of the FME algorithm can be described through a dynamic programming framework, as shown in \cref{algo:fme}, which consists of three steps.

\begin{algorithm2e}[!t]
\caption{A Dynamic Programming Framework for FME}
\label{algo:fme}
\SetKwInOut{Input}{Input}
\SetKwInOut{Output}{Output}
\Input{a LRA formula $\Phi$ of the form \cref{eq:input};}
\Output{a quantifier-free formula equivalent to $\Phi$}
\BlankLine
\algocomment{Invoke external algorithms to compute nice tree decompositions.}
Construct the primal graph~$G_\Phi=(V_\Phi,E_\Phi)$ of $\Phi$\;
Compute a nice tree decomposition~$T=(\mathcal{B},E_T)$ of~$G_\Phi$\; 
\algocomment{Initialize $\mathcal{I}$ from top down.}
$C\gets$ atomic formulas in $\Phi$\;
Initialize an empty queue of bags $q=\langle \rangle$\;
$q.\texttt{enqueue}(r_T)$ \tcp*{add to the end of the queue}
\While{$q$ is not empty}{
    $b \gets q.\texttt{dequeue}()$ \tcp*{remove from the front of the queue}
    $q.\texttt{enqueue}(b.\texttt{children})$\tcp*{break ties with any heuristics}
    \eIf{$b.\textnormal{\texttt{type}} = \textsf{leaf}$ }{
        $\mathcal{I}(b) \gets \{\}$ \tcp*{if $b$ is a leaf bag}
    }{
        $\mathcal{I}(b) \gets \{\varphi\in C \mid \texttt{var}(\varphi)\subseteq b\}$ \tcp*{if $b$ is of other types}
        $C \gets C\setminus \mathcal{I}(b)$\;
    }
}
\algocomment{Recursively compute $\mathcal{V}$ from bottom up.}
$q.\texttt{enqueue}(\mathcal{B}.\texttt{leaves})$ \;
\While{$q$ is not empty}{
    $b \gets q.\texttt{dequeue}()$\;
    $q.\texttt{enqueue}(b.\texttt{parent})$\;
    Compute $\mathcal{V}(b)$ according to \cref{eq:FME-update}\;
}
\Return{$\bigwedge \mathcal{V}(r_T)$}
\end{algorithm2e}

\textbf{Step 1}: Constructing a nice tree decomposition (Line 1--3).
To exploit the sparsity of the input formula, the algorithm first computes a tree decomposition of the formula's primal graph, by invoking some external tree decomposition algorithms. 
Different choices of the algorithms lead to different tree decompositions and hence impact the overall computational complexity, which will be detailed later in \cref{sec:complexity} and \cref{sec:exp}.
Nevertheless, our framework works as long as $T=(\mathcal{B},E_T)$ in Line 2 is a nice tree decomposition.

\textbf{Step 2}: Initialization (Line 4--17).
Based on the nice tree decomposition~$T$, we can perform the FME algorithm in a dynamic programming style.
To achieve this, we define two value functions, $\mathcal{I}$ and $\mathcal{V}$, that map bags in $T$ into sets of atomic formulas. 
In this step, the value function $\mathcal{I}$ is initialized from top down: when a bag $b\in T$ is visited, all atomic formulas $\varphi$ with $\textsf{var}(\varphi)\subseteq b$ are collected in $\mathcal{I}(b)$ and removed from $C$. 
According to the definition of tree decompositions, it is straightforward to see that the following properties hold:
Each atomic formula~$\varphi\in C$ is assigned to a bag $b$ such that $\textsf{var}(\varphi)\subseteq b$ and $\textsf{Lv}(b)$ is minimized. We note that, when adding the children of a bag $b$ into the queue in Line~10, we can break ties with any user-specified heuristics. The selection of the heuristics impacts the overall efficiency in the practical implementation, which will be detailed in \cref{sec:exp}.

\textbf{Step 3}: Recursion (Line 18--25).
The value function $\mathcal{V}$ is then updated from bottom up, i.e., $\mathcal{V}(b)$ is updated after the values for all $b$'s children have been updated.
Let us denote by $b.\texttt{(left/right)child}$ the (left/right) child of a (join) bag $b$, 
by $b.\texttt{children}$ the collection of all children of a bag $b$,
and by $b.\texttt{forget}$ the removed variable at a forget bag $b$.
At each step, $\mathcal{V}(b)$ is updated according to the type of the bag $b\in T$:
\begin{equation}\label{eq:FME-update}
\begin{aligned}
    \mathcal{V}(b) \gets 
    \begin{cases}
        \mathcal{I}(b)\cup \mathcal{V}(b.\texttt{leftchild}) \cup \mathcal{V}(b.\texttt{rightchild}) &\text{ if $b$ is a join bag}\\
        \mathcal{I}(b)\cup \mathcal{V}(b.\texttt{children}) &\text{ if $b$ is an introduce bag}\\
        \mathcal{I}(b)\cup \textsf{FME} (\bigwedge\mathcal{V}(b.\texttt{children}), b.\texttt{forget}) &\text{ if $b$ is a forget bag}\\
        \mathcal{I}(b) &\text{ if $b$ is a leaf bag} 
    \end{cases}
\end{aligned}    
\end{equation}

Finally, the algorithm outputs the conjunction of all atomic formulas stored in $\mathcal{V}(r_T)$.

\begin{theorem}[Correctness]
    The formula $\bigwedge \mathcal{V}(r_T)$ is equivalent to $\Phi$.
\end{theorem}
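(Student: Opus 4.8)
The plan is to read the dynamic program as a re-ordering of FME in which variables are eliminated in the order they are forgotten (bottom-up) in the nice tree decomposition, and to prove its correctness by structural induction on $T=(\mathcal{B},E_T)$ from the leaves up to the root. The invariant I would carry, for every bag $b$, speaks about two derived quantities: $F_b$, the set of quantified variables forgotten at some forget bag in the subtree $T_b$, and $C_b$, the set of atomic formulas of $\Phi$ that Step~2 assigns (via $\mathcal{I}$) to some bag of $T_b$. The claim is
\[
   \bigwedge \mathcal{V}(b)\ \equiv\ \exists F_b.\ \bigwedge C_b, \qquad\text{with}\qquad \textsf{var}\!\left(\bigwedge \mathcal{V}(b)\right)\cap V_\Phi\ \subseteq\ V_b .
\]
The point of this form is that it subsumes all the bookkeeping about the fill-in constraints generated along the way, so no individual FME output ever needs to be tracked by hand. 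Instantiating it at $r_T$ settles the theorem: since $V_{r_T}=\emptyset$, every quantified variable occurs in some bag but not in the empty root bag, hence is forgotten at a (unique) forget bag, so $F_{r_T}=\{x_1,\dots,x_m\}$; and every atomic formula is placed somewhere, so $C_{r_T}=\{\varphi_i\}_i$; therefore $\bigwedge\mathcal{V}(r_T)\equiv\exists x_m\cdots\exists x_1.\ \bigwedge_i\varphi_i$, i.e. $\Phi$.

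Two ``engines'' drive the induction. First, the single-variable specialization of the correctness of FME recalled above: for any finite set $C$ of LRA atomic formulas and any variable $y$, $\exists y.\ \bigwedge C\equiv\bigwedge\textsf{FME}(C,y)$, and $\textsf{FME}(C,y)$ introduces no variable not already occurring in $C$ (and is again a set of linear atomic formulas, so FME remains applicable). Second, elementary prenex reasoning: $\exists y$ commutes past any conjunct not containing $y$, and adjacent existential quantifiers commute. Everything else is purely combinatorial and rests on axiom~(4) of tree decompositions. Concretely I would first establish: (i) for a forget bag $b$ with child $b'$ and forgotten variable $v=b.\texttt{forget}$, one has $V_{b'}=V_b\cup\{v\}$ with $v\notin V_b$; (ii) the bags containing any $v\in F_b$ form a connected subtree lying entirely inside $T_b$, whence $F_b\cap V_b=\emptyset$ and, for a join bag with children $b_1,b_2$, the sets $F_{b_1},F_{b_2}$ are disjoint from one another and from $V_b$; and (iii) if $\varphi\in C_b$ then every quantified variable of $\varphi$ lies in $V_b\cup F_b$ (using also that the quantified variables of an atomic formula form a clique of $G_\Phi$, hence sit in a common bag, so $\mathcal{I}$ really does place every atomic formula; free variables are carried along as parameters and are exempt from the locality clause).

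Granting (i)--(iii), the inductive step is a short case split mirroring the four cases of \cref{eq:FME-update}. A \emph{leaf} bag is trivial, since $\mathcal{V}(b)$, $F_b$ and $C_b$ are all empty. For an \emph{introduce} bag, $F_b$ is inherited from the child, $C_b=\mathcal{I}(b)\uplus C_{b'}$, and the claim is immediate from the induction hypothesis together with $\textsf{var}(\mathcal{I}(b))\cap V_\Phi\subseteq V_b$. For a \emph{join} bag $b$, $C_b=\mathcal{I}(b)\uplus C_{b_1}\uplus C_{b_2}$ and $F_b=F_{b_1}\uplus F_{b_2}$; since by (ii) the variables of $F_{b_i}$ occur only inside $T_{b_i}$, they occur neither in $\mathcal{I}(b)$ nor in $C_{b_{3-i}}$, so the quantifier block $\exists F_b$ distributes over the three conjunctions and the two induction hypotheses close the case. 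For a \emph{forget} bag $b$ with child $b'$, $C_b=\mathcal{I}(b)\uplus C_{b'}$, $F_b=\{v\}\uplus F_{b'}$ with $v\notin V_b$ (hence $v\notin\textsf{var}(\mathcal{I}(b))$) and $v\notin F_{b'}$ (as $v\in V_{b'}$ and $F_{b'}\cap V_{b'}=\emptyset$); moving $\exists v$ outward past $\bigwedge\mathcal{I}(b)$ and invoking the FME engine gives $\bigwedge\mathcal{I}(b)\wedge\bigwedge\textsf{FME}(\bigwedge\mathcal{V}(b'),v)\equiv\bigwedge\mathcal{I}(b)\wedge\exists v.\bigwedge\mathcal{V}(b')$, after which the induction hypothesis on $b'$ and a final re-nesting of quantifiers finish it; the locality clause is immediate because FME eliminates $v$ and adds no new variable.

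The crux, I expect, is not any individual case of the induction but the combinatorial bookkeeping of (ii)--(iii): extracting from the coherence axiom of tree decompositions the statement that every atomic formula and every forgotten variable is ``confined'' to the subtree in which the dynamic program handles it — so that when $v$ is eliminated at its forget bag, $\mathcal{V}(b')$ genuinely already contains \emph{all} atomic formulas that still mention $v$. Once that confinement is pinned down (together with minor edge cases such as single-variable atomic formulas, which the clique argument still places in a non-leaf bag), the remainder is a mechanical unwinding of \cref{eq:FME-update}.
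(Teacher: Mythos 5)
Your proposal is correct and follows essentially the same route as the paper: a bottom-up structural induction on the nice tree decomposition with the invariant $\bigwedge\mathcal{V}(b)\equiv\exists F_b.\,\bigwedge C_b$, which is exactly the paper's $\bigwedge\mathcal{V}(b)\equiv\exists\,\textsf{ElimVar}(T_b).\,\bigwedge_{b'\in\mathcal{B}_b}\mathcal{I}(b')$, instantiated at the root. The only difference is that you spell out the combinatorial confinement facts (disjointness of forgotten variables across subtrees, clique-containment of each atomic formula's variables, the locality clause) that the paper treats as immediate consequences of the tree-decomposition axioms.
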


\begin{proof}
    Let $T_{b}=(\mathcal{B}_b, E_{T_b})$ denote the sub-tree rooted at a bag $b$. 
    We denote by $\textsf{ElimVar}(T_b)\defeq (\bigcup_{b'\in T_b} b') \setminus b$ the set of removed variables at forget bags in $T_b$.
    In this proof, we prove a stronger result by showing that, for each bag $b\in \mathcal{B}$, 
    $\bigwedge \mathcal{V}(b)\equiv \exists \textsf{ElimVar}(T_b).~\bigwedge_{b'\in \mathcal{B}_b} \mathcal{I}(b')$.
    Therefore, the original statement is a special case when $b$ is the root $r$ of $T$, by noting that $\Phi=\exists x_m, \dots, \exists x_1.~\bigwedge_{b\in \mathcal{B}} \mathcal{I}(b)$. 
    
    The proof is done by induction on the structure of the nice tree decomposition~$T$. 
    Suppose the above property holds for every descendant of a bag $b$, we shall prove that the property still holds for $b$, depending on the type of $b$.
    When $b$ is a join, introduce, or leaf bag, the proof is straightforward by the definition of $\mathcal{V}$.
    When $b$ is a forget bag, we have 
    \begin{equation*}
        \begin{aligned}
            \bigwedge \mathcal{V}(b) &= \bigwedge \big( \mathcal{I}(b)\cup \textsf{FME} (\bigwedge\mathcal{V}(b.\texttt{children}), b.\texttt{forget})\big)\\
            & \equiv \bigwedge \mathcal{I}(b) \wedge \bigwedge  \textsf{FME} (\exists \textsf{ElimVar}(T_{b.\texttt{children}}).\bigwedge_{b'\in \mathcal{B}_{b.\texttt{children}}} \mathcal{I}(b'), b.\texttt{forget}) \\
            & \equiv \bigwedge \mathcal{I}(b) \wedge \bigg( \exists \textsf{ElimVar}(T_{b.\texttt{children}}) \exists b.\texttt{forget}.~\bigwedge_{b'\in \mathcal{B}_{b.\texttt{children}}} \mathcal{I}(b')\bigg)\\
            &\equiv \exists \textsf{ElimVar}(T_b).~\bigwedge \mathcal{I}(b) \wedge\bigwedge_{b'\in \mathcal{B}_{b.\texttt{children}}} \mathcal{I}(b') 
            \equiv \exists \textsf{ElimVar}(T_b).~\bigwedge_{b'\in \mathcal{B}_b} \mathcal{I}(b'),
        \end{aligned}
    \end{equation*}
    where the second line is obtained by applying the induction hypothesis, and the last line is correct because variables in $\textsf{ElimVar}(T_b)$ do not occur in $\mathcal{I}(b)$.
\end{proof}


\subsection{Extending to CAD and General QE}\label{ssec:extension}

\myparagraph{Extending to CAD's projection phase.}
The framework~\cref{algo:fme} can be utilized in CAD's projection phase with only minor modifications.
We present the whole algorithm in the full version~\cref{app:algo}
and only explain the differences here:
(i) The two value functions $\mathcal{I}$ and $\mathcal{V}$ are defined as mappings from bags in $T$ to set of polynomials in $\Real[x_1,\dots,x_n]$, instead of atomic formulas;
(ii) In the recursion step, for a forget bag $b$, we compute  $\mathcal{V}(b)$ via $\mathcal{V}(b) \gets \mathcal{I}(b)\cup \textsf{Proj} (\mathcal{V}(b.\texttt{children}), b.\texttt{forget})$,
while for other types of bags, the computation of $\mathcal{V}(b)$ remains the same as in \cref{algo:fme}; and
(iii) Finally, after applying the projection operation for all quantified variables, the CAD algorithm continues the projection phases for free variables, after which the lifting phase can start.
The correctness of the framework follows from the fact that the algebraic operations in the projection operator (presented in~\cref{app:cad}) preserve locality, i.e., will not introduce new variables.
We shall note here that we \emph{cannot} simply replace the invocation of the FME algorithm in \cref{algo:fme} by the CAD algorithm, i.e., the projection phase and lifting phase of CAD must be separated. 
This is because the lifting phase produces a formula in disjunction normal form, on which we can not directly apply the dynamic programming framework.

\myparagraph{Extending to general quantifier elimination algorithms.} 
We now extend our framework to an arbitrary logic theory $\mathcal{T}$ that admits quantifier elimination.
This is achieved by extracting a variable elimination order from the tree decomposition, which involves the following steps:
(1) Construct a nice tree decomposition $T$ of the input $\mathcal{T}$-formula's primal graph;
(2) Traverse $T$ from top to bottom to obtain an ordering of the eliminated variables corresponding to the forget bags; and
(3) Finally, reverse the ordering and use it as the variable elimination order in any quantifier elimination procedure for theory $\mathcal{T}$.
The pseudo-code description of the algorithm is presented in~\cref{app:algo}.
The correctness of the algorithm is straightforward, as the change of variable order will not affect the correctness of a quantifier elimination algorithm.

In fact, the variable ordering extracted via the above algorithm can be shown to be a \emph{perfect elimination ordering}\footnote{In graph theory, a perfect elimination ordering of a graph $G$ is an ordering of its vertices such that for every vertex $v$, the set consisting of $v$ and its neighbors that come after $v$ in the ordering forms a clique in $G$.}, say $(1,3,8,7,6,5,4,2)$ for our running example.
Hence, it is possible that our variable ordering coincides with the one extracted from a chordal extension graph~\cite{li23jsc}. 
However, since we have exploited the sparsity information in building the tree decompositions, our variable orderings are usually superior for performing quantifier elimination on sparse problem instances.
This will be further evidenced by our experimental results in \cref{sec:exp}.
Moreover, our framework is stated for general quantifier elimination procedures and enables a fine-grained complexity analysis, explained in the following \cref{sec:complexity}.

\section{Complexity Analysis}\label{sec:complexity}

In this section, we analyze the worst-case complexity of our framework in \cref{sec:algo}.
The core idea is to combine the standard analysis with a special kind of tree decomposition, called \emph{balanced tree-decomposition}. 
We prove that when the treewidth is constant, a nice and balanced tree-decomposition with logarithmic height in the number of variables can be found in sub-polynomial time.
By working on this tree decomposition, our dynamic programming framework can achieve exponential improvement in the worst-case complexity for both FME and CAD.
To our knowledge, this gives the first complexity result of FME and CAD directly related to treewidth sparsity.

\myparagraph{Balanced tree-decomposition.}
The concept of \emph{$(\beta,\gamma)$-balanced tree decompositions}~\cite{chatterjee18toplas} generalizes the notion of balanced tree decompositions that arise in the analysis of the space complexity of tree decomposition algorithms~\cite{bodlaender98joc,elberfeld10focs}.
Informally, a $(\beta,\gamma)$-balanced tree-decomposition requires, for each bag $b$, the size of subtree $T_b$ decreases proportionally as $\textsf{Lv}(b)$ increases.

\begin{definition}[$(\beta, \gamma)$-Balanced Tree Decomposition] 
For constants $0<\beta<1$ and $\gamma\in \Nat_{>0}$,
a binary tree-decomposition $T$ is called $(\beta, \gamma)$-\emph{balanced} if, for every bag $b$, every descendant $b'$ of $b$ with $\textsf{Lv}(b')-\textsf{Lv}(b)=\gamma$ satisfies that $|T_{b'}|\le \beta \cdot |T_b|$.
\end{definition}

The construction of a $(\beta, \gamma)$-balanced tree decomposition is elucidated in~\cite[Sec.~3]{chatterjee18toplas}, which gives the following theorem.

\begin{theorem}[{\cite[Thm.~3.1]{chatterjee18toplas}}]\label{thm:balanced}
Given a graph $G$ with $n$ vertices and of constant treewidth~$k$.
For any fixed $\delta>0$ and $\lambda\in \Nat$ with $\lambda\ge2$, a binary $(\beta, \gamma)$-balanced tree-decomposition $T$ with $\beta = (\frac{1+\delta}{2})^{\lambda-1}$ and $\gamma = \lambda$ can be constructed in $O(n\cdot \log n)$ time and $O(n)$ space.
Moreover, the tree decomposition $T$ has $O(n)$ bags and width $\frac{6\lambda}{\delta}\cdot(k+1)-1$  at most.
\end{theorem}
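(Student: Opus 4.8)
The plan is to reduce the statement to a recursive \emph{rebalancing} of an arbitrary tree decomposition of $G$, exploiting the classical fact that graphs of bounded treewidth admit small balanced vertex separators --- concretely, that a single bag of any tree decomposition can be promoted to a balanced separator of the underlying graph. First I would compute, in linear time, an initial tree decomposition $\widehat{T}=(\widehat{\mathcal B},E_{\widehat T})$ of $G$ of width exactly $k$ and with $O(n)$ bags, using Bodlaender's algorithm~\cite{bodlaender96joc} (applicable since $k$ is constant); a routine post-processing makes $\widehat T$ binary while preserving its width and $O(n)$ size. This provides the combinatorial object to operate on, from which all topological information about separators in $G$ can be read off by bookkeeping on subtree sizes.

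The core is a recursive procedure $\mathsf{Balance}(\widehat T',S)$ that takes a tree decomposition $\widehat T'$ of an induced subgraph $G'\subseteq G$ together with a \emph{boundary set} $S\subseteq V(G')$ required to occur in the root bag of the output, and returns a binary, $(\beta,\gamma)$-balanced tree decomposition of $G'$ whose root bag contains $S$. Inside a call: if $G'$ has constant size, output a path of bags directly; otherwise, perform $\gamma=\lambda$ successive balanced separations \emph{within} $\widehat T'$, where each round locates (by a linear scan of subtree sizes) a bag whose removal, together with the separators already chosen in this group, shrinks the largest remaining piece by a factor $\approx \tfrac{1+\delta}{2}$. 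The $\lambda\ge 2$ separator bags so accumulated are chained into $\le\lambda$ consecutive output bags (each enlarged by $S$); each resulting piece is recursively balanced with its new, enlarged-but-bounded boundary and hung below the chain; finally the tree is made binary. The top-level call is $\mathsf{Balance}(\widehat T,\emptyset)$. Here $\lambda$ is exactly what forces the number of free vertices (those outside the boundary) of each piece to drop by the claimed factor $\beta=(\tfrac{1+\delta}{2})^{\lambda-1}$ over the $\gamma=\lambda$ levels of a group, and $\delta>0$ is the slack that will be needed to keep the boundaries from growing unboundedly.

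To finish I would verify three things. (i) \emph{Correctness}: the output satisfies the three tree-decomposition axioms. Coverage and the edge axiom are immediate from the fact that each separator/root bag comes from $\widehat T$ and that induced-subgraph decompositions restrict correctly; the delicate axiom is connectivity (each vertex occupies a connected subtree), which is preserved precisely because every boundary set passed down is a union of original bags, so "lifting" it into a new root bag cannot disconnect a vertex's occurrence subtree. (ii) \emph{Balance, size, height, time, space}: each group of $\gamma=\lambda$ levels multiplies the free-vertex count of a piece by $\beta<1$, so the recursion has depth $O\big(\tfrac{\lambda\log n}{\log(2/(1+\delta))}\big)=O(\log n)$ for fixed $\delta,\lambda$; charging bags to free vertices through a geometric series yields $O(n)$ bags; each of the $O(\log n)$ recursion levels does $O(n)$ total work (subtree-size maintenance and surgery), hence $O(n\log n)$ time and $O(n)$ space. (iii) \emph{Width}: along every root-to-leaf path of the output, bound the accumulated boundary by $\tfrac{6\lambda}{\delta}(k+1)$, whence every bag has size at most $\tfrac{6\lambda}{\delta}(k+1)$ and the width is at most $\tfrac{6\lambda}{\delta}(k+1)-1$.

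The main obstacle is item (iii). Naively, a separator of size $\le k+1$ is produced at essentially every node of a recursion tree of depth $\Theta(\log n)$, which would only give width $O(k\log n)$, not a constant. The role of the parameters is an amortization: a separator bag is \emph{charged} into the boundary of a descendant piece only when that piece has shrunk (by roughly a $1-\Theta(\delta/\lambda)$ factor) relative to the piece where the separator arose, so any single piece inherits separator bags whose total number of vertices is dominated by a convergent geometric series summing to $O(\lambda/\delta)\cdot(k+1)$. Pinning down exactly which of the $\lambda$ separator bags of a round get committed to which sub-piece --- so that the shrinkage condition underlying this series \emph{and} the $\beta$-decay of free vertices hold simultaneously --- and then summing the series to the stated constant $6\lambda/\delta$, is the technical heart of the argument, and is the content of~\cite[Sec.~3]{chatterjee18toplas}.
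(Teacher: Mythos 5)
There is an important mismatch of expectations here: the paper does not prove this statement at all. It is \cref{thm:balanced}, quoted directly from \cite[Thm.~3.1]{chatterjee18toplas}, and the paper's entire ``proof'' is the citation itself (``The construction \dots is elucidated in \cite[Sec.~3]{chatterjee18toplas}''). So there is no internal argument to compare yours against; the only question is whether your sketch would stand as an independent proof of the imported result.

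Judged that way, your outline is a reasonable reconstruction of the kind of recursive, separator-based rebalancing used in the original work (start from a width-$k$, $O(n)$-bag decomposition via Bodlaender, recursively split using bags that act as balanced separators, group $\gamma=\lambda$ levels per round, re-binarize), and your accounting for size, height, time and space is plausible. But it is not a complete proof, and you say so yourself: the decisive step is exactly the width bound $\frac{6\lambda}{\delta}\cdot(k+1)-1$, i.e., the amortization showing that the boundary sets accumulated along a root-to-leaf path stay bounded by a convergent geometric series rather than growing like $O(k\log n)$, together with the simultaneous guarantee that the free-vertex count decays by $\beta=(\frac{1+\delta}{2})^{\lambda-1}$ every $\gamma=\lambda$ levels (note the exponent $\lambda-1$ versus $\lambda$ separation rounds, which is precisely where the careful assignment of separator bags to sub-pieces matters). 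Deferring that to \cite[Sec.~3]{chatterjee18toplas} is acceptable in the context of this paper --- it is exactly what the authors do --- but it means your attempt is a proof sketch with the technical heart outsourced, not a self-contained proof; if the goal were a genuine re-derivation, that charging argument is the part you would still have to supply and verify yields the specific constant $6\lambda/\delta$.
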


Relying on the above theorem, we can construct a nice tree decomposition with logarithmic height, as stated below. 

\begin{proposition}\label{prop:balanced}
    Given a graph $G$ with $n$ vertices and of constant treewidth $k$, 
    a nice tree decomposition $T$ of height $O(\log n)$ and constant width can be constructed in $O(n\log n)$ time.
\end{proposition}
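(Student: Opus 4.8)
The plan is to start from Theorem~\ref{thm:balanced}, instantiate its parameters to obtain a balanced tree decomposition with logarithmic height, and then convert it into a nice tree decomposition while preserving both the logarithmic height (up to a constant factor) and the constant width. First I would fix concrete choices of the free parameters in Theorem~\ref{thm:balanced}: take $\delta$ to be any fixed positive constant (say $\delta = 1$) and $\lambda = 2$, so that $\gamma = 2$ and $\beta = \frac{1+\delta}{2} = 1$ --- wait, that degenerate choice of $\delta$ must be avoided, so instead pick $\delta = \tfrac12$ and $\lambda = 2$, giving $\beta = \tfrac34 < 1$ and $\gamma = 2$. Theorem~\ref{thm:balanced} then yields a binary $(\tfrac34, 2)$-balanced tree decomposition $T_0$ with $O(n)$ bags, width at most $\frac{12}{1/2}\cdot(k+1)-1 = 24(k+1)-1$, which is a constant since $k$ is constant, all constructed in $O(n\log n)$ time and $O(n)$ space.

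Next I would argue that $T_0$ has height $O(\log n)$. This is the key consequence of the balance condition: by the definition of a $(\beta,\gamma)$-balanced tree decomposition, any chain of descendants going down $\gamma$ levels shrinks the subtree size by a factor of $\beta$. Hence after descending $\gamma \cdot t$ levels from the root, the subtree rooted there has size at most $\beta^t \cdot |T_0| = \beta^t \cdot O(n)$. Since every non-root subtree has size at least $1$, we need $\beta^t \cdot O(n) \ge 1$, forcing $t = O(\log_{1/\beta} n) = O(\log n)$, and therefore the total height is at most $\gamma \cdot t = O(\log n)$. This is a short counting argument and I would present it in a line or two.

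Then I would invoke the standard transformation of an arbitrary tree decomposition into a nice one, citing \cite[Lem.~7.4]{cygan15book-parameterized} as already referenced in the preliminaries: given a tree decomposition of width $w$ with $N$ bags, one obtains a nice tree decomposition of the same width $w$ with $O(w\cdot N)$ bags in $O(w^2 \cdot N)$ time. Since $w$ is constant and $N = O(n)$, this runs in $O(n)$ time and keeps the width constant. The only subtlety is height: the transformation replaces each edge of the original decomposition by a path of introduce/forget bags of length at most $2w$ (to morph one bag into the adjacent one), and splits each bag with two children into a small gadget of constant size; thus each original root-to-leaf path of length $h$ becomes a path of length at most $O(w\cdot h)$. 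With $w$ constant and $h = O(\log n)$, the resulting nice tree decomposition $T$ still has height $O(\log n)$. Combining the $O(n\log n)$ cost of Theorem~\ref{thm:balanced} with the $O(n)$ cost of the niceness transformation gives the claimed $O(n\log n)$ total running time.

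The main obstacle --- really the only non-routine point --- is making sure the niceness transformation does not blow up the height by more than a constant factor. The textbook statement of \cite[Lem.~7.4]{cygan15book-parameterized} is usually phrased only in terms of width and number of bags, not height, so I would need to either observe directly that the standard construction is local (each tree edge and each branching bag is replaced by a gadget of size $O(w)$, hence any root-to-leaf path grows by a factor $O(w)$), or re-derive this explicitly. Everything else --- choosing the parameters, the geometric-decay height bound, and the additive combination of running times --- is straightforward bookkeeping.
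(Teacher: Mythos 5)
Your proposal is correct and follows essentially the same route as the paper: instantiate \cref{thm:balanced} with $\delta=\tfrac12$, $\lambda=2$ to get a $(\tfrac34,2)$-balanced decomposition of constant width, derive the $O(\log n)$ height from the geometric shrinkage of subtree sizes, and then convert to a nice tree decomposition while noting that, since the width is constant, the conversion only stretches each root-to-leaf path by a constant factor. The only difference is presentational: the paper spells out the niceness transformation explicitly (adding an empty root via a forget chain, duplicating join bags, and inserting introduce/forget chains along edges) precisely to make the height-preservation point you flag as the one subtlety, whereas you handle it by appealing to the locality of the standard construction behind the cited lemma.
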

\begin{proof}
    Take $\delta=1/2$ and $\lambda=2$ in \cref{thm:balanced}, then a $(3/4, 2)$-balanced tree decomposition $T=(\mathcal{B},E_T)$ of width at most $24(k+1)-1$ can be constructed in $O(n\cdot \log n)$ time.
    Because of the $(3/4,2)$-balancing, we have in every 2-levels of $T$ a decrease of at least 3/4 in the size of the bags.
    Therefore, the height of $T$ is $2\log_{\frac{3}{4}} n$, which is $O(\log n)$. Then, we extend $T$ into a nice tree decomposition, which involves the following steps:

    Step 1: reroot. Choose an arbitrary bag $r\in \mathcal{B}$ of $T$ to be its current root. We first construct a new bag $r'$ such that $V_{r'}=\emptyset$ as the new root of the nice tree decomposition. Then, a sequence of forget bags is introduced to remove the variables in $V_r$ one by one until reaching $r'$. This step takes $O(k)$ time and increases the height of $T$ by $k$ at most.

    Step 2: ensure join bags have identical children. For each bag $b$ that has two children, we construct two more bags that are identical to $b$ and insert them between $b$ and its two children, respectively. This step takes $O(n)$ time and will double the height at most.

    Step 3: create introduce and forget chains. After the previous steps, for each edge $(b_1, b_2)$ in $T$, the bags $V_{b_1}$ and $V_{b_2}$ may still differ by more than one variables. So we insert between $b_1$ and $b_2$ a chain of introduce and forget bags to make sure that every consecutive pair of bags differ in at most one variable. This step takes $O(kn)$ time and increases the height by up to $k$ times at most.

    To conclude, when $k$ is a constant, we can obtain a nice tree decomposition of height $O(\log n)$ in $O(n)$ time. Hence, the total time cost remains $O(n\log n)$ as in \cref{thm:balanced}.
\end{proof}

In the following, we analyze the complexity of our frameworks for FME and CAD in~\cref{sec:algo} under the assumption that $k$ is a constant. 
For consistency\footnote{Recall that the CAD's projection phase requires performing the projection operation for all variables, including the quantifier-free ones.}, we assume that $m=n$ in the input formula~\cref{eq:input}, i.e., there are no free variables.

\myparagraph{Complexity analysis for FME.} 
The complexity of the FME algorithm can be measured by the number of atomic formulas produced in the elimination process.
Assume that there are $s$ atomic formulas in the input formula~\cref{eq:input}.
Then, after eliminating the variable $x_r$, the number of atomic formulas in the output depends on the number of lower/upper bounds, which is at most $|L_r|\cdot|U_r|\le \frac{s^2}{4}$.
Therefore, the complexity of eliminating $n$ variables is at most $O(s^{2^n})$.

By combining the standard analysis with the nice tree decomposition obtained in \cref{prop:balanced}, we show that a single exponential complexity upper bound w.r.t. $n$ can be achieved when the treewidth is a constant.

\begin{theorem}\label{thm:fme-complexity}
    If the primal graph of an LRA formula $\Phi$ in \cref{eq:input} has constant treewidth and includes $s$ atomic formulas, the number of atomic formulas in the output of \cref{algo:fme} is $s^{O(n)}$ at most.
\end{theorem}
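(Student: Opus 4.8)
The key is to run \cref{algo:fme} on the nice, $(\beta,\gamma)$-balanced tree decomposition $T$ of logarithmic height guaranteed by \cref{prop:balanced}, and to bound the number of atomic formulas stored in $\mathcal{V}(b)$ as a function of the height of the subtree $T_b$. First I would set up the recurrence. Fix a leaf-to-root path and let $h(b)$ denote the height of $T_b$. I claim that the number of atomic formulas in $\mathcal{V}(b)$ is at most $g(h(b))$ for a suitable function $g$, proved by induction on the tree structure from the leaves up, mirroring the correctness proof. For a leaf bag, $|\mathcal{V}(b)| = |\mathcal{I}(b)| \le s$ (actually $0$, but $\le s$ suffices). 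For an introduce or join bag with children $b_1$ (and $b_2$), $|\mathcal{V}(b)| \le s + |\mathcal{V}(b_1)| + |\mathcal{V}(b_2)| \le s + 2g(h(b)-1)$, since each of the (at most two) children has strictly smaller height. For a forget bag with child $b'$, applying $\textsf{FME}$ to a set of $N$ atomic formulas produces at most $N^2/4$ new ones, so $|\mathcal{V}(b)| \le s + |\mathcal{V}(b')|^2/4 \le s + g(h(b)-1)^2/4$. Thus $g$ satisfies $g(h) \le s + \max\{\,2g(h-1),\ g(h-1)^2/4\,\} \le s + g(h-1)^2$, say, with $g(0) = s$.

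**Solving the recurrence.** The squaring at forget bags is what drives the doubly-exponential blowup in the naive analysis, so the crux is that here $h = h(r_T) = O(\log n)$ rather than $h = n$. Iterating $g(h) \le s + g(h-1)^2 \le (g(h-1))^2 (1 + s/g(h-1)^2) \le 2\,g(h-1)^2$ (using $g(h-1) \ge s \ge 1$), and unwinding, one gets $g(h) \le 2^{2^h - 1}\, s^{2^h} \le (2s)^{2^h}$. Plugging in $h = O(\log n)$: $2^h = 2^{O(\log n)} = n^{O(1)}$, so $|\mathcal{V}(r_T)| \le (2s)^{n^{O(1)}}$. To land on the stated bound $s^{O(n)}$ rather than $s^{n^{O(1)}}$, I would be more careful at introduce/join bags: on a balanced binary tree of logarithmic height, the total number of bags is $O(n)$, and along any root-to-leaf path the number of \emph{forget} bags equals the number of variables eliminated on that path, which is at most $n$; but the number of \emph{squaring} steps on the path is exactly the number of forget bags on that path. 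Since there are only $n$ forget bags in all of $T$ (one per eliminated variable), and the height is $O(\log n)$, the worst single path has $O(\log n)$ forget bags — giving $2^{O(\log n)} = n^{O(1)}$, i.e. $s^{n^{O(1)}}$, which is already single-exponential-free of $s$'s tower but not literally $s^{O(n)}$. The honest route to $s^{O(n)}$ is to observe that \emph{each} of the $n$ forget bags squares a quantity that was built from at most $s$ base formulas combined through $O(\log n)$ levels; a cleaner accounting tracks, for each bag, the number $|\textsf{ElimVar}(T_b)|$ of variables already eliminated in $T_b$, and shows $|\mathcal{V}(b)| \le s^{c\,(|\textsf{ElimVar}(T_b)| + 1)}$ for a constant $c$ absorbing the logarithmic-depth introduce/join overhead, by induction: forget bags increment $|\textsf{ElimVar}|$ by one and square, contributing a factor $s^{2c}$ which is still $s^{O(|\textsf{ElimVar}|)}$; introduce/join bags merge children whose $\textsf{ElimVar}$ sets are disjoint (by the connected-subtree property), so their exponents add, and there are only $O(\log n) \cdot$ (merges) $= O(n)$ of them total. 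At the root, $|\textsf{ElimVar}(T_{r_T})| = n$, yielding $|\mathcal{V}(r_T)| = s^{O(n)}$.

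**Where the difficulty lies.** The main obstacle is the bookkeeping at introduce and join bags: naively, $2g(h-1)$ accumulated over $O(\log n)$ levels only gives polynomial-in-$n$ factors, which is harmless, but getting a \emph{clean} statement of the form $s^{O(n)}$ requires choosing the right potential — either $h(T_b)$ (easy recurrence, gives $s^{n^{O(1)}}$) or $|\textsf{ElimVar}(T_b)|$ (gives $s^{O(n)}$ but requires the disjointness observation for the additive behavior of exponents at join bags, which in turn uses condition (4) of the tree-decomposition definition). I would present the argument via $|\textsf{ElimVar}(T_b)|$, prove the invariant $|\mathcal{V}(b)| \le s^{c(|\textsf{ElimVar}(T_b)|+1)}$ by structural induction exactly paralleling the correctness proof, handle the four bag types as above, and note that the constant width of $T$ ensures each $\textsf{FME}$ or merge step is performed on formulas in a bounded number of variables so the total running time (not just formula count) is also $s^{O(n)}$ up to polynomial factors. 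Finally I would remark that the $O(n\log n)$ cost of building $T$ from \cref{prop:balanced} is dominated by this bound, so the overall complexity of \cref{algo:fme} is $s^{O(n)}$, completing the proof.
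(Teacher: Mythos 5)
Your first half is precisely the paper's proof: run \cref{algo:fme} on the balanced nice tree decomposition of \cref{prop:balanced}, bound $|\mathcal{V}(b)|$ by a height-indexed recurrence (leaves $\le s$, join/introduce add a term $\le s$ per level, forget squares), and unwind to $(2s)^{2^{h}}$ with $h=O(\log n)$. The paper stops exactly there and writes $\mathcal{V}(r)=O\big(s^{2^{O(\log n)}}\big)=s^{O(n)}$; you are right that, read with explicit constants, $2^{O(\log n)}$ is $n^{O(1)}$ rather than $O(n)$, and the paper simply glosses over this (its height bound is $c\log n$ with $c>1$ after the niceness transformation). So the observation that motivated your second half is a fair one, and up to that point your argument and the paper's coincide.

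The genuine gap is in your proposed repair. The invariant $|\mathcal{V}(b)|\le s^{c(|\textsf{ElimVar}(T_b)|+1)}$ does not survive the forget step: if the child satisfies $|\mathcal{V}(b')|\le s^{c(E'+1)}$ with $E'=|\textsf{ElimVar}(T_{b'})|$, then the FME step gives $|\mathcal{V}(b)|\le s+|\mathcal{V}(b')|^2/4\le s^{2c(E'+1)}$ (up to constants), whereas the invariant at $b$ only allows $s^{c(E'+2)}$; the required inequality $2c(E'+1)\le c(E'+2)$ fails for every $E'\ge 1$. Your phrasing that a forget bag ``contributes a factor $s^{2c}$'' treats squaring as multiplication by a bounded factor, but squaring doubles the \emph{entire} exponent, and no potential that gains $O(1)$ in the exponent per forget bag can absorb exponent doubling; since a single root-to-leaf path of the balanced decomposition can contain $\Theta(\log n)$ forget bags, this route collapses back to $s^{2^{\Theta(\log n)}}=s^{n^{\Theta(1)}}$. (The join-bag step, by contrast, is unproblematic, though not for the reason you give: the counts merely add, so the exponent is a maximum plus a constant, and no disjointness of the $\textsf{ElimVar}$ sets is needed.) In short, the part of your write-up that matches the paper is correct and is all the paper itself uses; the additional $\textsf{ElimVar}$-based accounting, which is what you rely on to reach the literal bound $s^{O(n)}$, is not valid as stated.
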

\begin{proof}
    Let $T$ be the nice tree decomposition given by \cref{prop:balanced} with height $h$. 
    In the following, we count the atomic formulas stored in $\mathcal{V}(b)$ for each bag $b$ in $T$.
    For a bag $b$ with $\textsf{Lv}(b)=h$, which must be a leaf bag, we have $\mathcal{V}(b)=\mathcal{I}(b)\le s$.
    Consider going one step up, for a bag $b$ with $\textsf{Lv}(b)=h-1$: 
    if $b$ is a join, introduce, or leaf bag, we have $\mathcal{V}(b)\le 3s$ according to the definition of $\mathcal{V}$;
    if $b$ is a forget bag, we apply the one-step analysis for FME and have $\mathcal{V}(b)\le s+\frac{s^2}{4}$.
    Since the height $h$ is bounded by $O(\log n)$ according to \cref{prop:balanced}, for the root $r$, we have $\mathcal{V}(r)=O(s^{2^{O(\log n)}})=s^{O(n)}$.
\end{proof}

\myparagraph{Complexity analysis for CAD.}
The CAD complexity is usually measured by the number of sign-invariant cells.
Our analysis uses a similar approach as \cite{li23jsc} but provides an improved bound based on the assumption of constant treewidth and the structure of balanced tree decompositions.

Before presenting the results, we introduce some tools to estimate the growth of the size of the polynomials in the projection phase.
For a set of polynomials $P\subset \Real[\seq{x}]$, we define the \emph{combined degree}~\cite{mccallum85-thesis} of $P$ to be $\max_{1\le i\le n} \deg_{x_i}(\prod_{p\in P} p)$, where $\deg_{x_i}(\cdot)$ denotes the degree of a polynomial in a single variable $x_i$. 
A set of polynomials has the $(m, d)$-property if it can be partitioned into $m$ sets, such that each set has maximum combined degree $d$~\cite[Def.~7]{bradford16jsc}.
It has been shown in \cite[Lem.~11]{bradford16jsc} that for a set $P\subset \Real[\seq{x}]$ with the $(m, d)$-property, the set $\textsf{Proj}(P,x_i)$ after projection has the $(M, 2d^2)$-property with $M=\lfloor \frac{(m+1)^2}{2}\rfloor $ for any $x_i\in \seq{x}$. 

\begin{proposition}
    If the primal graph of an NRA formula $\Phi$ in \cref{eq:input} has constant treewidth and the corresponding polynomial set has the $(m,d)$-property, the number of cells in the CAD algorithm can be made $m^{O(n)}(2Md)^{O(n\log n)}$ at most.
\end{proposition}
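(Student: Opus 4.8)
The plan is to mirror the FME complexity argument in \cref{thm:fme-complexity}, but now tracking the $(m,d)$-property of the polynomial sets stored in $\mathcal{V}(b)$ as we move up the balanced nice tree decomposition $T$ from \cref{prop:balanced}, and then translating the final $(m,d)$-property at the root into a cell count. First I would fix $T$ to be the nice tree decomposition of height $h = O(\log n)$ and constant width $k' := \frac{6\lambda}{\delta}(k+1)-1$ guaranteed by \cref{prop:balanced}. The key invariant to prove by structural induction (bottom-up, as in the correctness proof) is: for every bag $b$ at level $\ell$, the set $\mathcal{V}(b)$ has the $(m_\ell, d_\ell)$-property, where $m_\ell$ and $d_\ell$ are recurrences driven by the two ways $\mathcal{V}$ grows. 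For a leaf bag, $\mathcal{V}(b) = \mathcal{I}(b)$ inherits the input $(m,d)$-property. For a join or introduce bag, $\mathcal{V}(b)$ is a union of at most three sets each with the $(m_{\ell+1}, d_{\ell+1})$-property, so it has the $(3 m_{\ell+1}, d_{\ell+1})$-property (concatenate partitions). For a forget bag, we apply $\textsf{Proj}(\cdot, b.\texttt{forget})$ and then union with $\mathcal{I}(b)$; by \cite[Lem.~11]{bradford16jsc} the projection turns the $(m_{\ell+1}, d_{\ell+1})$-property into the $(\lfloor (m_{\ell+1}+1)^2/2\rfloor,\, 2 d_{\ell+1}^2)$-property, and the union with $\mathcal{I}(b)$ (which has the input $(m,d)$-property, hence the weaker $(m, \max(d, 2d_{\ell+1}^2))$-property) adds at most $m$ to the part count. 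So the dominant recurrences are roughly $m_\ell \lesssim m_{\ell+1}^2$ and $d_\ell \lesssim 2 d_{\ell+1}^2$, iterated over $h = O(\log n)$ levels.

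Next I would solve these recurrences. Both are of the form $a_\ell \le c \cdot a_{\ell+1}^2$, whose $h$-fold iteration from a base value $a_h = a$ gives $a_0 \le c^{2^h - 1} a^{2^h}$. With $h = O(\log n)$ we get $2^h = n^{O(1)}$; being careful with the constant in the exponent of the logarithm, $\log_{4/3} n$-style bounds make $2^h$ at most $n^{c_0}$ for some constant $c_0$, so $m_0 = m^{O(n)} \cdot 2^{O(n)}$ and $d_0 = d^{O(n)} \cdot 2^{O(n)}$ — that is, the root set $\mathcal{V}(r)$ has the $(M', d')$-property with $M' = m^{O(n)}$ and $d' = d^{O(n)}$. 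Actually, to match the target bound I would phrase the endpoint using the quantities $M = \lfloor (m+1)^2/2\rfloor$ and $d$ directly: a single projection step already costs a squaring, so after the $O(\log n)$-height DP the combined degree is $d' \le (2d)^{2^h} = (2d)^{n^{O(1)}}$ and, folding $M$ in, the "size $\times$ degree" product that ultimately governs the cell count is $(2Md)^{O(n \log n)}$ after taking logarithms — I will need to reconcile whether the exponent is $O(n)$ or $O(n\log n)$, and I expect $O(n\log n)$ because the per-level degree growth, when you take $\log\log$, accumulates an extra $\log$ factor from the $O(\log n)$ levels (the $2^h$ analysis gives $\log d' = 2^h \log(2d)$, and writing $2^h$ more carefully against the balanced-tree ratio yields the $n\log n$ in the exponent claimed in the statement).

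Finally I would convert the root's $(M', d')$-property into a cell-count bound. Here I reuse exactly the mechanism of \cite{li23jsc} / the standard CAD cell-count estimate: once the full projection tower is built, the number of sign-invariant cells is bounded by a product, over the $n$ projection levels, of (number of polynomials) $\times$ (degree bound) type quantities, and since our DP produces, at the free-variable tail of the algorithm (step (iii) of the CAD extension in \cref{ssec:extension}), a polynomial set with the $(M', d')$-property, each of the remaining $O(n)$ classical projection steps multiplies the relevant count by a $\mathrm{poly}(M', d')$ factor, contributing another factor absorbed into $m^{O(n)}(2Md)^{O(n\log n)}$. The cleanest route is: cells $\le \big(\text{number of polys} \cdot (\text{max degree}+1)\big)^{O(n)}$, substitute $M'$ for the polynomial count and $d'$ for the degree, and expand $M' = m^{O(n)}$, $d' = (2d)^{O(n\log n)}$, absorbing the $M$ in $M'$ to land on the stated $m^{O(n)}(2Md)^{O(n\log n)}$.

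The main obstacle I anticipate is pinning down the exponents precisely — specifically, showing the degree blow-up is $(2Md)^{O(n\log n)}$ rather than a worse $(2Md)^{n^{O(1)}}$ or a better $(2Md)^{O(n)}$. This hinges on a tight treatment of how $2^{h}$ interacts with the $(3/4,2)$-balancing: each level of $T$ does not perform a projection (only forget bags do, and between two forgets there may be several introduce/join/leaf levels), so the number of \emph{squaring} operations along any root-to-leaf path equals the number of forget bags on it, which is at most the number of distinct eliminated variables \emph{in that subtree}, bounded via the balanced structure. Getting the bookkeeping right — separating "height of $T$" from "number of forget bags on a branch" from "$\log$ of the iterated-squaring depth" — is where the real work lies; the rest (union bounds for join/introduce, invoking \cite[Lem.~11]{bradford16jsc} for forget, the terminal cell-count estimate) is routine.
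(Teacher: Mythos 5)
Your projection-phase bookkeeping matches the paper: you track the $(m,d)$-property of $\mathcal{V}(b)$ up the balanced nice tree decomposition of height $h=O(\log n)$, using \cite[Lem.~11]{bradford16jsc} at forget bags and union bounds at join/introduce bags, which gives exactly the per-level recurrences (roughly $m_\ell \lesssim m_{\ell+1}^2$, $d_\ell \lesssim 2d_{\ell+1}^2$) that the paper solves to show $\mathcal{V}(b)$ has the $\bigl(3^{2^{l_b}-1}M^{2^{l_b-1}},\,2^{2^{l_b}-1}d^{2^{l_b}}\bigr)$-property at depth $l_b$ below the leaves. Up to that point you and the paper agree.

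The genuine gap is in how you convert this into a cell count. You propose to take only the root's $(M',d')$-property and apply a generic estimate of the form ``cells $\le (\text{number of polys}\cdot(\text{degree}+1))^{O(n)}$.'' With your own values $M'=m^{O(n)}$ and $d'=(2d)^{2^h}$ this yields something like $m^{O(n^2)}(2d)^{n^{O(1)}\cdot O(n)}$, which is not the claimed $m^{O(n)}(2Md)^{O(n\log n)}$; the uncertainty you flag about ``$O(n)$ vs.\ $O(n\log n)$'' cannot be resolved along this route. The paper instead bounds the lifting phase by a product over \emph{all bags}: each bag $b$ contributes a factor $2K_b+1$, where $K_b$ is the root count (size times combined degree) of the polynomial set at that bag, and the decisive structural fact is that the balanced binary tree has at most $2^{\mathsf{Lv}(b)}$ bags per level (and $O(n)$ bags in total). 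Thus the huge per-bag factors $\approx (2Md)^{2^{l_b}}$ occurring near the root are multiplied across only $2^{h-1-l_b}$ bags, so every level contributes roughly $(2Md)^{2^{h}}=(2Md)^{O(n)}$ to the product, and summing over the $O(\log n)$ levels gives the exponent $O(n\log n)$, while the leaf level contributes $(md)^{O(n)}$ and hence the $m^{O(n)}$ factor. Without this per-bag, level-weighted product — i.e., using the intermediate sets $\mathcal{V}(b)$ for lifting rather than only the fully projected root set — the stated bound does not follow, so your final step as written would fail.
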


\begin{proof}
    The proof is inspired by the proof of \cite[Thm.~11, Thm.~12]{li23jsc}. Here we explain the main idea. 
    
    We first analyze the increase in the number of polynomials in the projection phase.
    Let $T$ be the nice tree decomposition given by \cref{prop:balanced} with height $h$. 
    By assumption, for each leaf bag $b$ with $\textsf{Lv}(b)=h$, the polynomial set $\mathcal{V}(b)$ has the $(m,d)$-property.
    Following a similar argument in the proof of \cref{thm:fme-complexity}, we can show that for each bag $b$ with $\textsf{Lv}(b)\le h-1$, the polynomial set $\mathcal{V}(b)$ has the $\big( 3^{2^{l_b}-1}M^{2^{{l_b}-1}},2^{2^{l_b}-1}d^{2^{l_b}}\big)$-property with $l_b=h-1-\textsf{Lv}(b)$. 

    Then we analyze the lifting phase. 
    Note that, for a polynomial set $P$, the real roots of the product $\prod_{p\in P} p$ include all real roots of each individual polynomial in $P$.
    Hence, if a univariate polynomial set $P$ has the $(m,d)$-property, the number of real roots in $P$ is at most $md$, and the number of the corresponding sets in $\Real^1$ is at most $2md+1$. 
    The total number of cells produced in the lifting phase is bounded by the product of all $\Real^1$-cells corresponding to each bag $b$, i.e. $\prod_{b\in \mathcal{B}} (2K_b+1)$, where
    \begin{equation*}
        K_b = \begin{cases}
            md, &\quad \text{if $\textsf{Lv}(b)=h$, i.e., $l_b$ undefined;}\\
            3^{2^{l_b}-1}M^{2^{{l_b}-1}}2^{2^{l_b}-1}d^{2^{l_b}}, &\quad \text{if $h-1\ge \textsf{Lv}(b)\ge 0$, i.e., $0\le l_b\le h-1$.}
        \end{cases}
    \end{equation*}
    Since $T$ is a binary tree, the number of bags on each level is bounded by $2^{\textsf{Lv}(b)}$. 
    Therefore, we have
    \begin{equation*}
    \small{
    \begin{aligned}
        \prod_{b\in \mathcal{B}} (2K_b+1) &\le (md)^{2^h}\prod_{0\le l_b\le h-1} \big( 3^{2^{l_b}-1}M^{2^{{l_b}-1}}2^{2^{l_b}-1}d^{2^{l_b}}\big)^{2^{h-1-l_b}}\\
        &\le m^{2^h} 6^{(h-2)2^{h-1}+1} M^{h2^{h-2}}d^{(h+2)2^{h-1}}
        = O(m^{2^h} 6^{h2^{h-1}}M^{h2^{h-2}}d^{h2^{h-1}})\\
        & = m^{O(n)}(2Md)^{O(n\log n)}
    \end{aligned}
    }
    \end{equation*}
    where the final step is obtained by using $h=O(\log n)$.
\end{proof}

\section{Implementation and Experiments}\label{sec:exp}

In this section, we explain our implementation details and present the experimental results.
We aim to answer the following research question: 
Compared with the existing heuristics for deciding the variable elimination ordering, \emph{how does our tree-decomposition-based algorithm perform on the QE problems of LRA and NRA instances with relatively small treewidth?}


\myparagraph{Computation Environment.} 
We use the tool from~\cite{tamaki19jco-td} to compute tree decompositions, which are then processed by Python scripts.
For LRA, we compare different heuristics based on a naive FME implementation in Python, while for NRA, we work on CAD implemented in \textsc{Mathematica}~\cite{mathematica}.
All experiments were conducted on a Windows PC with i7-13700 CPU and 32GB RAM.

\myparagraph{Existing Heuristics.}
For comparison, we briefly explain the most frequently used heuristic strategies for selecting the variable elimination ordering in FME and CAD.
For FME, the baseline Random strategy makes $N=5$ trials according to different orderings that are sampled randomly and records the best result; the Greedy strategy chooses the next variable to be eliminated by minimizing the number of generated constraints.
For CAD, the Brown's heuristic, denoted Brown~\cite{brown04issac}, chooses the next eliminated variable that has the lowest maximum degree across all polynomials in which it appears (and breaks ties with addition rules);
the PEO strategy~\cite{li23jsc} extracts a perfect elimination ordering from the chordal extension of the primal graph;
and we also compare with the default strategy in \textsc{Maple}~\cite{maple} to suggest variable ordering, denoted SVO.

\begin{table}[t]
    \centering
    \caption{FME performance on randomly generated LRA instances}
    \label{tab:fme}
    \begin{tabular}{c|c r|c|c|c}
        \hline
        ID & \multicolumn{2}{c|}{Instance Set (10 instances per set)} & Random & Greedy & Ours \\
        \hline

        \multirow{2}{*}{1} & \multirow{2}{*}{\makecell{\#var=15\  \#ineq=75\\ \#elim=5\ \#tw=2}} & Ave. ineq. count & 51,608.4 & \textbf{4,404.7} & 4,841.3 \\
        \cline{4-6}
        & & Ave. run~~Time(s) & 0.21 & \textbf{0.018} & \textbf{0.018} \\
        \hline

        \multirow{2}{*}{2} & \multirow{2}{*}{\makecell{\#var=15\  \#ineq=75\\ \#elim=5\ \#tw=4}} & Ave. ineq. count & $>$10,000,000 & 28,362.3 & \textbf{28,244.8} \\
        \cline{4-6}
        & & Ave. run~~Time(s) & NA & \textbf{0.17} & \textbf{0.17} \\
        \hline
        
        \multirow{2}{*}{3} & \multirow{2}{*}{\makecell{\#var=20\  \#ineq=100\\ \#elim=10\ \#tw=2}} & Ave. ineq. count & $>$10,000,000 & 290,597.8 & \textbf{194,645.3} \\
        \cline{4-6}
        & & Ave. run~~Time(s) & NA & 2.0 & \textbf{1.1} \\
        \hline

        \multirow{2}{*}{4} & \multirow{2}{*}{\makecell{\#var=20\  \#ineq=100\\ \#elim=6\ \#tw=4}} & Ave. ineq. count & $>$10,000,000 & \textbf{313,584.0} & 313,585.9 \\
        \cline{4-6}
        & & Ave. run~~Time(s) & NA & 2.6 & \textbf{2.4}  \\
        \hline

        \multirow{2}{*}{5} & \multirow{2}{*}{\makecell{\#var=30\  \#ineq=150\\ \#elim=12\ \#tw=3}} & Ave. ineq. count & $>$10,000,000 & 343,705.4 & \textbf{110,304.9} \\
        \cline{4-6}
        & & Ave. run~~Time(s) & NA & 3.7 & \textbf{1.2} \\
        \hline
        
        \multirow{2}{*}{6} & \multirow{2}{*}{\makecell{\#var=30\  \#ineq=150\\ \#elim=5\ \#tw=8}} & Ave. ineq. count & $>$10,000,000 & \textbf{974,233.4} & \textbf{974,233.4} \\
        \cline{4-6}
        & & Ave. run~~Time(s) & NA & 14.9 & \textbf{14.4} \\
        \hline
    \end{tabular}
\end{table}

\myparagraph{Our Implementation.}
In our practical implementation, we elected not to construct a balanced tree decomposition as described in \cref{sec:complexity}. While the main advantage of the balanced tree decomposition lies in estimating the asymptotic complexity, it often performs suboptimally on problems of relatively small scale. Instead, we work with the tree decomposition computed by \cite{tamaki19jco-td} and follow the approach detailed in \cref{ssec:extension} to extract the elimination order\footnote{In our non-parallel setting, using the elimination order is functionally equivalent to running the dynamic programming framework, without the explicit computation of the value functions.}. When traversing from top to bottom, we break ties using the aforementioned Greedy and Brown heuristics for FME and CAD, respectively.

\myparagraph{Benchmarks.}
As our algorithm targets dealing with LRA and NRA problems that are \emph{small enough} to do quantifier elimination but also \emph{have low treewidth}.
This requirement rules out most existing standard benchmarks (an analysis on SMT-LIB benchmarks~\cite{barrett16smtlib} can be found in the full version~\cref{app:smtlib}.
Hence, for comparison, we generated a set of benchmarks with treewidth sparsity as follows:
We construct benchmark inequality systems from graphs of prescribed treewidth $k$ by a two-stage procedure. We first generate a graph of treewidth $k$ by iteratively attaching nodes to a $(k+1)$-clique. Then, for each bag, we enumerate all monomials of total degree not exceeding the maximum degree ($1$ for LRA), select at least one highest-degree monomial, and randomly include other monomials with a probability ranging from 0.05 to 0.15, while ensuring that every scoped variable appears.
For each set of hyperparameters, we randomly generated 10 examples for LRA and 1 example for NRA, because the randomly selected monomials significantly impact the CAD efficiency, and the average runtime is usually dominated by the worst instance. 


 
\myparagraph{Results.} \cref{tab:fme} and \cref{tab:cad} present the results of FME and CAD, respectively. 
On LRA benchmarks, our strategy consistently yielded the lowest average inequality count and average runtime, achieving the best result in 4 out of 6 sets. 
The Greedy strategy showed a very similar performance with the runtime difference being negligible, but was significantly slower on ID 3 and 5.
The Random strategy performed poorly, generating an extremely high number of inequalities, often exceeding $10^7$ for most instances.
On NRA benchmarks, our strategy also demonstrates superior performance in all but one benchmark, in terms of both the number of cells generated and the runtime.
For benchmark ID 9 and ID 12, it significantly outperforms all other methods.
The only instance where our strategy did not have the lowest cell count was ID 10, where PEO was superior.
The results answer our research question affirmatively, indicating that \emph{our proposed strategy is effective and more efficient in dealing with problems with low treewidth}, compared to the existing heuristics. More experiments on benchmarks from~\cite{li23jsc} are available at~\cref{app:experiments}.

\begin{table}[t]
    \centering
    \caption{CAD performance on randomly generated NRA instances}
    \label{tab:cad}
    \begin{tabular}{c|c r|c|c|c|c}
        \hline
        ID & \multicolumn{2}{c|}{Instance} & SVO & Brown & PEO & Ours  \\
        \hline


        \multirow{2}{*}{7} & \multirow{2}{*}{\makecell{\#var=6\  \#ineq=6\\ \#max\_deg=2\ \#tw=2}} & Cells & 1,133,532 & 271,120 & 274,724 & \textbf{110,828}\\
        \cline{4-7}
        & & ~~Time(s) & 640.3 & 307.0 & 114.7 & \textbf{40.5} \\
        \hline



        \multirow{2}{*}{8} & \multirow{2}{*}{\makecell{\#var=7\  \#ineq=4\\ \#max\_deg=3\ \#tw=3}} & Cells & 126,328 & 89,804 & 89,804 & \textbf{49,108} \\
        \cline{4-7}
        & & ~~Time(s) & 42.1 & 34.5 & 29.0 & \textbf{11.6} \\
        \hline

        

        \multirow{2}{*}{9} & \multirow{2}{*}{\makecell{\#var=7\  \#ineq=5\\ \#max\_deg=2\ \#tw=2}} & Cells & 2,020,378 & 1,926,208 & 740,380 & \textbf{181,090} \\
        \cline{4-7}
        & & ~~Time(s) & 702.0 & 416.2 & 96.1 & \textbf{84.1} \\
        \hline


        \multirow{2}{*}{10} & \multirow{2}{*}{\makecell{\#var=8\  \#ineq=5\\ \#max\_deg=4\ \#tw=2}} & Cells & 3,721,372 & 3,721,372 & \textbf{984,008} & 2,155,132 \\
        \cline{4-7}
        & & ~~Time(s) & 1843.8 & 1718.2 & \textbf{567.7} & 1090.3 \\
        \hline

        \multirow{2}{*}{11} & \multirow{2}{*}{\makecell{\#var=8\  \#ineq=6\\ \#max\_deg=2\ \#tw=3}} & Cells & \textbf{38,212} & 69,748& 69,748 & \textbf{38,212}  \\
        \cline{4-7}
        & & ~~Time(s) & \textbf{42.0} & 633.3& 427.9 & \textbf{42.0}  \\
        \hline


        \multirow{2}{*}{12} & \multirow{2}{*}{\makecell{\#var=9\  \#ineq=10\\ \#max\_deg=2\ \#tw=3}} & Cells & 1,804,224 & 1,804,224 & 954,432 & \textbf{52,992} \\
        \cline{4-7}
        & & ~~Time(s) & 742.3 & 746.7 & 156.7 & \textbf{83.3} \\
        \hline
        
    \end{tabular}
\end{table}

\section{Conclusion}\label{sec:summary}
This paper introduces a high-level, treewidth-aware approach to reduce the computational burden of quantifier elimination procedures, especially FME and CAD, for problems with treewidth sparsity. 
By leveraging parameterized algorithm tools, our method establishes an improved complexity upper bound, thus offering novel insights into mitigating the inherent complexity of quantifier elimination. 
Future work will focus on combining this treewidth-aware method with existing heuristics and exploring its application in various practical scenarios.

\begin{credits}
\myparagraph{\ackname} 
We thank the anonymous reviewers for their valuable comments and helpful suggestions.
This work has been partially funded by the National Key R\&D Program of China under grant No.\ 2022YFA1005101 and 2022YFA1005102, the National NSF of China under grant No.62192732 and W2511064,
the ERC Starting Grant 101222524 (SPES), and the Ethereum Foundation Research Grant FY24-1793.

\myparagraph{Data Availability Statement.} 
The code for our experiments is available at \url{https://doi.org/10.5281/zenodo.18150640}.

\end{credits}

\bibliographystyle{splncs04}
\bibliography{bibfiles/hao}

\appendix
\section{Appendix}

\subsection{CAD Algorithm}\label{app:cad}

In this part, we provide a formal description of the CAD algorithm.
The notations and definitions mostly follow Jirstrand's introductory paper~\cite{jirstrand95}.
Given a set of polynomials in $\mathcal{F}\subset \Real[x_1,\dots,x_n]$, the (CAD) algorithm decomposes the $\Real^n$ space into finitely many sign-invariant regions, called \emph{cells}. After decomposition, one can determine the sign of the polynomials in each cell using sampled points in this cell. 

In the one-dimensional case, the CAD algorithm can be understood as computing all the real roots of a set of univariate polynomials. Then, the sign-invariant cells are exactly the real roots themselves, along with the intervals they divide. Using this idea, $n$-dimensional CAD is constructed recursively based on $(n-1)$-dimensional CAD.

We first set some basic definitions. 
A region $R$ is a connected subset of $\Real^n$. 
The set $Z(R)=R\times \Real$ is called a \emph{cylinder} over $R$. 
Let $f,f_1,f_2$ be continuous, real-valued functions over $R$.
A $f$-section of $Z(R)$ is the set  $\{(x, f(x))\mid x\in R\}$.
A $(f_1,f_2)$-sector of $Z(R)$ is the set  $\{(x, y)\mid x\in R, f_1(x)< y < f_2(x)\}$.
A decomposition of $X$ is a finite collection of disjoint regions components whose union is $X$.

\begin{definition}[Stack]
    A \emph{stack} over $R$ is a decomposition which consists of $f_i$-sections and $(f_i,f_{i+1})$-sectors, where $f_0<\dots < f_{k+1}$ for all $x\in R$ and $f_0=-\infty$, $f_{k+1}=+\infty$.
\end{definition}

\begin{definition}[Delineable]
    A set of polynomials is \emph{delineable} over a region $R$ if the number of distinct collective roots remains a constant, i.e., their roots do not intersect or disappear.
\end{definition}

\begin{definition}[Cylindrical Decomposition]. A decomposition $\mathcal{D}$ of $\Real^n$ is \emph{cylindrical} if: 
(i) when $n=1$, $\mathcal{D}$ is a partition of $\Real$ into finitely set of numbers, and the finite and infinite open intervals bounded by these numbers.
(ii) when $n>1$, there exists a cylindrical decomposition $\mathcal{D}'$ of $\Real^{n-1}$, and over each region $R\in D'$, there is a stack belonging to $\mathcal{D}$.
\end{definition}

\begin{definition}[Cylindrical Algebraic Decomposition]
A decomposition is \emph{algebraic} if each of its components is a semi-algebraic set. 
A \emph{Cylindrical Algebraic Decomposition~(CAD)} of $\Real^n$ is a decomposition which is both cylindrical and algebraic. The components of a CAD are called \emph{cells}.
\end{definition}

\myparagraph{Defining the projection operator.}
In the following, we first recall the definition of resultants and discriminants for multivariate polynomials, then give the formal definition of McCallum's projection operator.
Let $\Int[x_1,\dots,x_n]$ denote the ring of all integral polynomials in $x_1,\dots,x_n$.
For any $x_k$ with $1\le k\le n$, a polynomial $f\in \Int[x_1,\dots,x_n]$ can be written as $f=\sum_{i=0}^d a_i x_k^i$, where the coefficients $a_i$ are polynomials in $\Real[x_1,\dots x_{k-1}, x_{k+1}, \dots, x_n]$ and $a_d$ is not the zero polynomial.
The number $d\in \Nat$ is called the \emph{degree} of $f$ in variable $x_k$, denoted $\textsf{deg}(f,x_k)$.
The set of non-zero coefficients is denoted by $\textsf{coeff}(f, x_k)$. 
The GCD of the coefficients is called the \emph{content} of $f$ w.r.t. $x_k$ denoted by $\textsf{cont}(f, x_k)$.
The polynomial $f$ is called \emph{primitive} w.r.t. $x_k$ if $\textsf{cont}(f, x_k)=1$, and the polynomial $f/\textsf{cont}(f, x_k)$ is called the \emph{primitive part} of $f$ w.r.t. $x_k$. 

\begin{definition}[Resultant]
Let $f_1, f_2$ be two polynomials in $\Real[x_1, \dots, x_n]$. For a variable $x_k$ with $1\le k \le n$, we write $f_1$ and $f_2$ as polynomials in $x_k$:
\begin{equation*}
    \begin{aligned}
        f_1 &~= a_s x_k^s + a_{s-1}x_k^{s-1} + \dots + a_0,\\
        f_2 &~= b_r x_k^r + b_{r-1}x_k^{r-1} + \dots + b_0.
    \end{aligned}
\end{equation*}
The resultant of $f_1$ and $f_2$ with respect to $x_k$, denoted $\textsf{Res}(f_1, f_2, x_k)$, is defined as
\begin{equation*}
\textsf{Res}(f_1, f_2, x_k) \defeq \begin{vmatrix}
a_s &~ a_{s-1} &~ \dots &~ a_0 &~ &~ &~ \\
 &~ a_s &~ a_{s-1} &~ \dots &~ a_0 &~ &~ \\
 &~ &~ \ddots &~ &~ &~ \ddots &~ \\
 &~ &~ &~ a_s &~ a_{s-1} &~ \dots &~ a_0 \\
b_r &~ b_{r-1} &~ \dots &~ b_0 &~ &~ &~ \\
 &~ b_r &~ b_{r-1} &~ \dots &~ b_0 &~ &~ \\
 &~ &~ \ddots &~ &~ &~ \ddots &~ \\
 &~ &~ &~ b_r &~ b_{r-1} &~ \dots &~ b_0
\end{vmatrix}.    
\end{equation*}
\end{definition}

\begin{definition}[Discriminant]
Let $f$ be a polynomial in $\mathbb{R}[x_1, \dots, x_n]$. For a variable $x_k$ with $1\le k \le n$, assume that
\begin{equation*}
    f_1 = a_s x_k^s + a_{s-1}x_k^{s-1} + \dots + a_0,
\end{equation*}
The discriminant of $f$ with respect to $x_k$, denoted  $\textsf{Dis}(f, x_k)$, is defined as
\begin{equation*}
    \textsf{Dis}(f, x_k) \defeq \tfrac{(-1)^{\tfrac{s(s-1)}{2}}}{a_s} \textsf{Res}\left(f, \tfrac{\partial f}{\partial x_k}, x_k\right).
\end{equation*}
\end{definition}

\begin{definition}[McCallum's Projection Operator~\cite{mccallum88jsc}]
Let $P\subset \Int[x_1,\dots,x_n]$ be a finite set of polynomials containing at least two variables.
Let $B$ be an irreducible basis of the primitive part of $P$.
The McCallum's projector operator, denoted $\textsf{Proj}(\cdot, \cdot)$, is defined as
\begin{equation*}
\begin{aligned}
    \textsf{Proj}(P,x_k) \defeq  &~~\set{\textsf{cont}(f,x_k)\given f\in P}\\
    \cup &~~\set{\textsf{coeff}(f,x_k)\given f\in B}\\
    \cup &~~\set{\textsf{dis}(f,x_k)\given f\in B}\\
    \cup &~~\set{\textsf{res}(f,g,x_k)\given f,g\in B, f\neq g}.
\end{aligned}
\end{equation*}
\end{definition}

\myparagraph{The projection phase.} 
Fix a variable elimination ordering from $x_n$ to $x_1$ if fixed (which is the reverse of the ordering used in our \cref{sec:pre}). The projection operator $\textsf{Proj}$ is applied recursively on $\mathcal{F}$ for $n-1$ times.
$$\begin{aligned}
\mathcal{F}_n &~\defeq \mathcal{F} \subset \Real[x_1,\dots, x_n]\\
\mathcal{F}_{n-1} &~\defeq \textsf{Proj}(\mathcal{F}_n, x_n) \subset \Real[x_1,\dots, x_{n-1}]\\
&~\vdots\\
\mathcal{F}_1 &~\defeq \textsf{Proj}(\mathcal{F}_{2}, x_{2}) \subset \Real[x_1]\\
\end{aligned}$$
In each step, the zero sets of the polynomials in $\mathcal{F}_{i-1}$ are the projection of ``significant points'' of the zero set of polynomial in $\mathcal{F}_i$, i.e. self crossings, isolated points, vertical tangent points etc.
 
\myparagraph{The lifting phase.} We first deal with the 1-dimensional case. Enumerate the real zeros of the monovariate polynomials in $\mathcal{F}_n$ as 
\begin{equation*}
    -\infty<\xi_1<\xi_2<\dots<\xi_s<+\infty,  
\end{equation*}
which gives an $\mathcal{F}_n$-invariant decomposition of $\Real$. The purpose of this phase is to isolate the above zeros and find sample points for each component in the decomposition. Note that for an open interval we may choose a rational sample point but for a zero we must store an exact representation of the algebraic number.

Then we lift a $\mathcal{F}_{i}$-invariant decomposition $\mathcal{D'}$ of $\Real^i$ to an $\mathcal{F}_{i+1}$-invariant decomposition $\mathcal{D}$ of $\Real^{i+1}$.
By construction, we know that $\mathcal{F}_{i+1}$ is delineable over each cell in $\mathcal{D'}$. For each cell $C\in D'\in \Real^i$, choose a sample point $(\hat{x}_1,\dots,\hat{x}_i)\in \Real^i$ and substitute the variables $(x_1,\dots,x_i)$ in $\mathcal{F}_{i+1}$ by these sampled values. In this way, the problem is again reduced to the 1-dimensional case and can be done similarly.

\subsection{The Formula in Running Example}\label{app:example}
\begin{align*}
\small
 \Phi &~\defeq \exists x_{1}, x_{2}, x_{3}, x_{4}, x_{5}, x_{6}, x_{7}, x_{8}.\\
 &~
 \begin{pmatrix}
 \small
 \begin{aligned}
    &~x_{1} + 2 x_{2} + 3 x_{3} \le 20  &~ \wedge &~ - x_{1} + x_{2} - 2 x_{3} \le 5 &~ \wedge &~ x_{1} - 4 x_{2} \le 0 \\
    \wedge &~3 x_{1} - 4 x_{2} - 3 x_{3} \le 3 &~ \wedge &~ 3 x_{1} + x_{3} \le -3 &~ \wedge &~ 3 x_{2} - 2 x_{3} + 4 x_{4} \le -5 \\
    \wedge &~5 x_{2} - x_{4} - 5 x_{5} \le 1 &~ \wedge &~ - 5 x_{2} - 3 x_{5} + 3 x_{8} \le 2 &~ \wedge &~ 4 x_{4} + x_{5} + x_{6} \le 1 \\
    \wedge &~3 x_{6} + 2 x_{7} \le -2 &~ \wedge &~ - 5 x_{3} \le -2 &~ \wedge &~ - x_{2} - x_{3} \le 1 \\
    \wedge &~x_{3} \le -1 &~ \wedge &~ x_{5} - 5 x_{6} \le -4 &~ \wedge &~ - 5 x_{4} + 5 x_{5} + 4 x_{6} \le -4 \\
    \wedge &~4 x_{2} - 4 x_{5} \le -5 &~ \wedge &~ 5 x_{2} + 3 x_{3} + 5 x_{4} \le 5 &~ \wedge &~ - 4 x_{2} + 3 x_{4} + 3 x_{5} \le -1 \\
    \wedge &~3 x_{2} \le -4 &~ \wedge &~- x_{6} - x_{7} \le 2 \\
 \end{aligned}
 \end{pmatrix}
\end{align*}

Eliminating the variable subset $\{x_1, x_2, x_3, x_4, x_5\}$ using the greedy algorithm follows the elimination order $x_1, x_4, x_3, x_2, x_5$ and yields $3,684$ inequalities. By contrast, our heuristic produces the elimination order $x_1, x_3, x_4, x_5, x_2$ and yields only $1,680$ inequalities, a reduction of 2,004 inequalities ($\approx 55.4\%$) compared with the greedy method.

\subsection{Extensions of the Dynamic Programming Framework}\label{app:algo}

\cref{algo:cad} presents the dynamic programming framework for the CAD's Projection Phase.
\cref{algo:qe} presents the algorithm for extracting a variable elimination order from a nice tree decomposition.

\begin{algorithm2e}[H]
\caption{A DP Framework for CAD's Projection Phase}
\label{algo:cad}
\SetKwInOut{Input}{Input}
\SetKwInOut{Output}{Output}
\Input{a NRA formula $\Phi$ of the form \cref{eq:input};}
\Output{a set of polynomials within $\Real[x_{m+1},\dots,x_n]$}
\BlankLine
\algocomment{Invoke external algorithms to compute nice tree-decompositions.}
Construct the primal graph~$G_\Phi=(V_\Phi,E_\Phi)$ of $\Phi$\;
Compute a nice tree decomposition~$T=(\mathcal{B},E_T)$ of~$G_\Phi$\;
\algocomment{Initialize $\mathcal{I}$ from top down.}
$P \subset \Real[\seq{x}] \gets$ the set of polynomials in $\Phi$\;
Initialize an empty queue of bags $q=\langle \rangle$\;
$q.\texttt{enqueue}(r_T)$ \;
\While{$q$ is not empty}{
    $b \gets q.\texttt{dequeue}()$\;
    $q.\texttt{enqueue}(b.\texttt{children})$\;
     \eIf{$b.\textnormal{\texttt{type}} = \textsf{leaf}$ }{
        $\mathcal{I}(b) \gets \emptyset$\;
    }{
        $\mathcal{I}(b) \gets \{p \in P \mid \texttt{var}(p)\subseteq b\}$\;
        $P \gets P\setminus \mathcal{I}(b)$\;
    }
}
\algocomment{Recursively compute $\mathcal{V}$ from bottom up.}
$q.\texttt{enqueue}(\mathcal{B}.\texttt{leaves})$ \;
\While{$q$ is not empty}{
    $b \gets q.\texttt{dequeue}()$\;
    $q.\texttt{enqueue}(b.\texttt{parent})$\;
    \Switch{$b.\textnormal{\texttt{type}}$}{
        \lCase{\textsf{join}}{ $\mathcal{V}(b)\gets \mathcal{I}(b)\cup \mathcal{V}(b.\texttt{leftchild}) \cup \mathcal{V}(b.\texttt{rightchild})$}
        \lCase{\textsf{introduce}}{$\mathcal{V}(b) \gets \mathcal{I}(b)\cup \mathcal{V}(b.\texttt{children})$}
        \lCase{\textsf{forget}}{$\mathcal{V}(b) \gets \mathcal{I}(b)\cup \textsf{Proj} (\bigwedge\mathcal{V}(b.\texttt{children}), b.\texttt{forget})$}
        \lCase{\textsf{leaf}}{$\mathcal{V}(b)\gets \mathcal{I}(b)$}
    }
}
\Return{$\bigcup \mathcal{V}(r_T)$}\;\algocomment{After this algorithm finishes, the CAD algorithm continuous the projection phase for free variables, until only one variable is left. Then the lifting phase begins.}
\end{algorithm2e}

\begin{algorithm2e}[t]
\SetKwInOut{Input}{Input}
\SetKwInOut{Output}{Output}
\Input{a $\mathcal{T}$-formula $\Phi$ of the form \cref{eq:input} and a QE procedure $\textsf{QE}$;}
\Output{a quantifier-free $\mathcal{T}$-formula equivalent to $\Phi$}
\BlankLine
Construct the primal graph~$G_\Phi=(V_\Phi,E_\Phi)$ of $\Phi$\;
Compute a (nice) tree decomposition~$T=(\mathcal{B},E_T)$ of~$G_\Phi$\;
\algocomment{Extract a variable elimination order from $T$.}
Initialize an empty queue of variables $order = \langle \rangle$\; 
Initialize an empty queue of bags $q = \langle \rangle$\;
$q.\texttt{enqueue}(r_T)$\;
\While{$q$ is not empty}{
    $b \gets \texttt{dequeue}(q)$\;
    $\texttt{enqueue}(q, b.\texttt{children})$\;
    \eIf{$b$ is root}{
        $order.\texttt{enqueue}(b)$\;
    }{
        $order.\texttt{enqueue}(b\setminus b.\textsf{parent})$
    }
}
$order\gets order.\texttt{reverse}$ \tcp*{reverse the ordering}
\Return{$\textnormal{\textsf{QE}}(\Phi, \text{order})$}
\caption{A General Framework for QE Procedures}\label{algo:qe}
\end{algorithm2e}

\subsection{Sparsity in SMT-LIB Benchmarks}\label{app:smtlib}

We examined the \texttt{QF\_LRA} and \texttt{QF\_NRA} categories from the SMT-LIB benchmark library~\cite{barrett16smtlib}. 
Using \texttt{pysmt}, we successfully parsed 1753 LRA instances and 9249 NRA instances. 
Among the parsed instances, only 337 LRA instances and 807 NRA instances have non-complete primal graphs.
We first analyzed the treewidth of these instances, the results are presented in \cref{tab:tw-lra} and \cref{tab:tw-nra}.
However, since many instances contain hundreds or more variables, it is impractical to perform a full tree decomposition for every instance (e.g., within a 10-minute timeout). 
Therefore, we also examined the edge density sparsity, defined by $|E|/\binom{|V|}{2}$, where $E$ and $V$ denote the number of edges and vertices of a primal graph, respectively.
The results are presented in \cref{tab:edge-lra} and \cref{tab:edge-nra}.
From these results, we found that SMT-LIB benchmarks are generally too large and too dense for demonstrating our algorithms.
\vspace{-0.5cm}

\begin{table}[H]
    \centering
    \caption{Treewidth distribution among 337 sparse instances in the \texttt{QF\_LRA} dataset}
    \label{tab:tw-lra}
    \resizebox{\linewidth}{!}{
    \begin{tabular}{c|c|c|c|c|c}
        \hline
        Treewidth range &~ Number &~ \% in sparse instances &~ Min \#var &~ \ Max \#var &~ Median \#var \\
        \hline
            (0, 10] & 1 & 0.27 & 70 & 70 & 70.0 \\
        \hline
            (10, 20] & 47 & 12.47 & 56 & 8954 & 144.0 \\
        \hline
            (20, 40] & 98 & 25.99 & 102 & 9937 & 2019.5 \\
        \hline
            (40, 60] & 31 & 8.22 & 259 & 7242 & 1307.0 \\
        \hline
            (60, 80] & 21 & 5.57 & 223 & 4467 & 1173.0 \\
        \hline
            (80, 100] & 21 & 5.57 & 315 & 7793 & 1433.0 \\
        \hline
            (100, 200] & 58 & 15.38 & 425 & 4855 & 2065.0 \\
        \hline
            (200, 250] & 5 & 1.33 & 1291 & 2452 & 1881.0 \\
        \hline
            Timeout & 95 & 0.25 & NA & NA & NA \\
        \hline
    \end{tabular}
    }
\vspace{-0.5cm}
\end{table}

\begin{table}[!t]
    \centering
    \caption{Treewidth distribution among 807 sparse instances in the \texttt{QF\_NRA} dataset}
    \label{tab:tw-nra}
    \resizebox{\linewidth}{!}{
    \begin{tabular}{c|c|c|c|c|c}
        \hline
        Treewidth range &~ Number &~ \% in sparse instances &~ Min \#var &~ \ Max \#var &~ Median \#var \\
        \hline
            (0, 5] & 11 & 1.36 & 6 & 248 & 6.0 \\
        \hline
            (5, 10] & 7 & 0.87 & 76 & 424 & 330.0 \\
        \hline
            (10, 20] & 20 & 2.48 & 33 & 8220 & 255.5 \\
        \hline
            (20, 40] & 294 & 36.43 & 37 & 9189 & 467.0 \\
        \hline
            (40, 60] & 153 & 18.96 & 68 & 9352 & 680.0 \\
        \hline
            (60, 80] & 109 & 13.51 & 102 & 6068 & 887.0 \\
        \hline
            (80, 100] & 53 & 6.57 & 126 & 3441 & 1167.0 \\
        \hline
            (100, 200] & 89 & 11.03 & 124 & 5149 & 1183.0 \\
        \hline
            (200, 300] & 7 & 0.87 & 279 & 1541 & 380.0 \\
        \hline
            (300, 550] & 2 & 0.25 & 605 & 756 & 680.5 \\
        \hline
            Timeout & 62 & 0.08 & NA & NA & NA \\
        \hline
    \end{tabular}
    }
\vspace{-0.3cm}
\end{table}

\begin{table}[!t]
    \centering
    \caption{Edge density among 337 sparse instances in the \texttt{QF\_LRA} dataset}
    \label{tab:edge-lra}
    \resizebox{\linewidth}{!}{
    \begin{tabular}{c|c|c|c|c|c}
        \hline
        Density range &~ Number &~ \% in sparse instances &~ Min \#var &~ \ Max \#var &~ Median \#var \\
        \hline
            (0.0, 0.05] & 225 & 59.68 & 307 & 9937 & 2864.0 \\
        \hline
            (0.05, 0.1] & 50 & 13.26 & 147 & 9385 & 365.5 \\
        \hline
            (0.1, 0.15] & 37 & 9.81 & 83 & 8622 & 204.0 \\
        \hline
            (0.15, 0.2] & 35 & 9.28 & 61 & 6416 & 132.0 \\
        \hline
            (0.2, 0.25] & 6 & 1.59 & 56 & 444 & 174.0 \\
        \hline
            (0.3, 0.35] & 8 & 2.12 & 1617 & 1617 & 1617.0 \\
        \hline
            (0.85, 0.9] & 1 & 0.27 & 1785 & 1785 & 1785.0 \\
        \hline
            (0.9, 0.95] & 4 & 1.06 & 1274 & 4985 & 2435.0 \\
        \hline
            (0.95, 1.0] & 11 & 2.92 & 1798 & 4285 & 1798.0 \\
        \hline
    \end{tabular}
    }
\vspace{-0.3cm}
\end{table}

\begin{table}[!t]
    \centering
    \caption{Edge density among 807 sparse instances in the \texttt{QF\_NRA} dataset}
    \label{tab:edge-nra}
    \resizebox{\linewidth}{!}{
    \begin{tabular}{c|c|c|c|c|c}
        \hline
        Density range &~ Number &~ \% in sparse instances &~ Min \#var &~ \ Max \#var &~ Median \#var \\
        \hline
            (0.0, 0.05] &~ 418 &~ 51.80 &~ 122 &~ 9929 &~ 2025.0 \\
            \hline
            (0.05, 0.1] &~ 189 &~ 23.42 &~ 123 &~ 9525 &~ 441.0 \\
            \hline
            (0.1, 0.15] &~ 50 &~ 6.20 &~ 57 &~ 3309 &~ 215.0 \\
            \hline
            (0.15, 0.2] &~ 23 &~ 2.85 &~ 28 &~ 4052 &~ 249.0 \\
            \hline
            (0.2, 0.25] &~ 2 &~ 0.25 &~ 154 &~ 154 &~ 154.0 \\
            \hline
            (0.25, 0.3] &~ 4 &~ 0.50 &~ 110 &~ 756 &~ 190.0 \\
            \hline
            (0.3, 0.35] &~ 15 &~ 1.86 &~ 68 &~ 282 &~ 168.0 \\
            \hline
            (0.35, 0.4] &~ 9 &~ 1.12 &~ 70 &~ 282 &~ 130.0 \\
            \hline
            (0.4, 0.45] &~ 16 &~ 1.98 &~ 59 &~ 1678 &~ 175.0 \\
            \hline
            (0.45, 0.5] &~ 14 &~ 1.73 &~ 70 &~ 216 &~ 95.5 \\
            \hline
            (0.5, 0.55] &~ 17 &~ 2.11 &~ 51 &~ 303 &~ 86.0 \\
            \hline
            (0.55, 0.6] &~ 16 &~ 1.98 &~ 33 &~ 876 &~ 115.5 \\
            \hline
            (0.6, 0.65] &~ 9 &~ 1.12 &~ 68 &~ 1136 &~ 208.0 \\
            \hline
            (0.65, 0.7] &~ 4 &~ 0.50 &~ 6 &~ 178 &~ 104.5 \\
            \hline
            (0.7, 0.75] &~ 5 &~ 0.62 &~ 172 &~ 702 &~ 605.0 \\
            \hline
            (0.75, 0.8] &~ 9 &~ 1.12 &~ 6 &~ 1925 &~ 6.0 \\
            \hline
            (0.8, 0.85] &~ 2 &~ 0.25 &~ 124 &~ 198 &~ 161.0 \\
            \hline
            (0.85, 0.9] &~ 3 &~ 0.37 &~ 132 &~ 3525 &~ 2602.0 \\
            \hline
            (0.9, 0.95] &~ 2 &~ 0.25 &~ 3802 &~ 5125 &~ 4463.5 \\
            \hline
    \end{tabular}
    }
\end{table}

\subsection{Additional Experiments} \label{app:experiments}

We conducted experiments on benchmarks of the lattice reachability problems from~\cite[Example.~9]{li23jsc}, and the results are presented in \cref{tab:lat}. 
Note that, since the primal graphs exhibit a special "straight-line" structure, the variable elimination orderings generated by our methods coincide with those of [51], and hence the experimental results also coincide. 

\begin{table}[!h]
    \centering
    \caption{CAD performance on lattice reachability problems over \texttt{Mathematica}}
    \label{tab:lat}
    \begin{tabular}{c|c r|c|c|c}
        \hline
        ID & \multicolumn{2}{c|}{Instance} & SVO & Brown & PEO/Ours \\
        \hline

        \multirow{2}{*}{13} & \multirow{2}{*}{\makecell{\#var=11\  \#ineq=8\\ \#max\_deg=2\ \#tw=3}} & Cells & \textbf{5,299} & 6,067 & \textbf{5,299} \\
        \cline{4-6}
        & & Time & 2.5 & \textbf{1.1} & 2.5 \\
        \hline
        
        \multirow{2}{*}{14} & \multirow{2}{*}{\makecell{\#var=12\  \#ineq=9\\ \#max\_deg=2\ \#tw=3}} & Cells & \textbf{14,509} & 24,141 & \textbf{14,509} \\
        \cline{4-6}
        & & Time & 6.8 & \textbf{5.4} & 7.4 \\
        \hline

        \multirow{2}{*}{15} & \multirow{2}{*}{\makecell{\#var=13\  \#ineq=10\\ \#max\_deg=2\ \#tw=3}} & Cells & \textbf{39,625} & 65,961 & \textbf{39,625} \\
        \cline{4-6}
        & & Time & 20.8 & \textbf{15.8} & 23.4 \\
        \hline

        \multirow{2}{*}{16} & \multirow{2}{*}{\makecell{\#var=14\  \#ineq=11\\ \#max\_deg=2\ \#tw=3}} & Cells & 127,195 & 185,515 & \textbf{109,787} \\
        \cline{4-6}
        & & Time & \textbf{25.6} & 43.5 & 53.6 \\
        \hline
        
        \multirow{2}{*}{17} & \multirow{2}{*}{\makecell{\#var=15\  \#ineq=12\\ \#max\_deg=2\ \#tw=3}} & Cells & 511,703 & \textbf{303,967} & \textbf{303,967} \\
        \cline{4-6}
        & & Time & 175.1 & 194.3 & \textbf{142.0} \\
        \hline
        
        \multirow{2}{*}{18} & \multirow{2}{*}{\makecell{\#var=16\  \#ineq=13\\ \#max\_deg=2\ \#tw=3}} & Cells & 3,113,905 & \textbf{847,351} & \textbf{847,351}  \\
        \cline{4-6}
        & & Time & 1565.2 & \textbf{430.7} & 432.9 \\
        \hline
        
        \multirow{2}{*}{19} & \multirow{2}{*}{\makecell{\#var=17\  \#ineq=14\\ \#max\_deg=2\ \#tw=3}} & Cells & NA & \textbf{2,362,235} & \textbf{2,362,235}  \\
        \cline{4-6}
        & & Time & $>3\text{h}$ & 1436.6 & \textbf{1399.5} \\
        \hline
    \end{tabular}
\end{table}

\subsection{The NRA Formulas in \cref{tab:cad}}

The randomly generated LRA formulas can be found via the link to our code.
The randomly generated sparse NRA formulas in \cref{tab:cad} are shown here.

\begin{align*}
 \Phi_7 &~\defeq \exists x_{1}, x_{2}, x_{3}, x_{4}, x_{5}, x_{6}.\\
 &~
 \begin{pmatrix}
 \begin{aligned}
    &~ 8 x_1x_2 + 6 x_1 + 5 x_3 + 5 \ge 0 \\
    \wedge&~ -4 x_1x_3 -10 x_4^{2} + 4 x_4 + 9 \ge 0 \\
    \wedge&~ 6 x_3x_5 -3 x_4x_5 + 4 x_4^{2} + 8 x_5^{2} -6 \ge 0 \\
    \wedge&~ -5 x_4x_5 + 7 x_4x_6 + x_5^{2} -5 \ge 0 \\
    \wedge&~ -2 x_2^{2} + 6 \ge 0 \\
    \wedge&~ -10 x_1x_3 -4 x_3^{2} + 5 x_2 -7 x_3 + 2 \ge 0 \\
 \end{aligned}
 \end{pmatrix}
\end{align*}

\begin{align*}
 \Phi_8 &~\defeq \exists x_{1}, x_{2}, x_{3}, x_{4}, x_{5}, x_{6}, x_{7}.\\
 &~
 \begin{pmatrix}
 \begin{aligned}
    &~ -8 x_1x_2^{2} + 5 x_1x_3^{2} + x_4^{2} + 7 \ge 0 \\
    \wedge&~ 7 x_1x_2x_4 -9 x_2^{2} -2 x_5^{2} + 5 \ge 0 \\
    \wedge&~ -2 x_1^{2}x_6 -6 x_4x_5 -8 \ge 0 \\
    \wedge&~ -9 x_2x_7^{2} + 6 x_1x_4 -9 \ge 0 \\
 \end{aligned}
 \end{pmatrix}
\end{align*}

\begin{align*}
 \Phi_9 &~\defeq \exists x_{1}, x_{2}, x_{3}, x_{4}, x_{5}, x_{6}, x_{7}.\\
 &~
 \begin{pmatrix}
 \begin{aligned}
    &~ 7 x_{1}x_{3} + 2 x_{2}x_{3} + 7 x_{2}^{2} -8 x_{3}^{2} + 4 x_{2} + 10 x_{3} \ge 0 \\
    \wedge&~ -5 x_{2}x_{3} -x_{3}x_{5} + 7 x_{5}^{2} -3 x_{2} \ge 0 \\
    \wedge&~ 7 x_{2}^{2} + 3 x_{4}^{2} + 4 x_{5}^{2} -10 x_{4} + 6 x_{5} \ge 0 \\
    \wedge&~ -x_{1}x_{6} + 2 x_{1}^{2} + 4 x_{6} \ge 0 \\
    \wedge&~ 9 x_{4}x_{7} -10 x_{4} \ge 0 \\
 \end{aligned}
 \end{pmatrix}
\end{align*}

\begin{align*}
 \Phi_{10} &~\defeq \exists x_{1}, x_{2}, x_{3}, x_{4}, x_{5}, x_{6}, x_{7}, x_{8}.\\
 &~
 \begin{pmatrix}
 \begin{aligned}
    &~ 10 x_{1}^{2}x_{2}x_{3} + 8 x_{1}^{4} -7 x_{1}x_{2}^{2} + 8 x_{1}^{2}x_{2} + 10 x_{1}^{2} + 10 x_{2}^{2} \ge 0 \\
    \wedge&~ 5 x_{3}x_{4}^{3} + 10 x_{4}^{4} + 4 x_{3}x_{5}^{2} + 2 x_{5}^{3} \ge 0 \\
    \wedge&~ 5 x_{3}x_{4}^{2}x_{7} -4 x_{3}^{2}x_{4} -9 x_{4}^{3} + 4 x_{3}x_{7} + 4 x_{3} + 8 x_{7} \ge 0 \\
    \wedge&~ 2 x_{4}x_{5}^{2}x_{6} + 2 x_{4}^{3}x_{6} + 3 x_{4}^{2}x_{5} + 4 x_{4}^{3} + 7 x_{4}x_{6} -10 x_{5} \ge 0 \\
    \wedge&~ 10 x_{4}x_{8}^{3} -3 x_{4}x_{8}^{2} -9 x_{4}^{3} + 3 x_{8}^{2} \ge 0 \\
 \end{aligned}
 \end{pmatrix}
\end{align*}

\begin{align*}
 \Phi_{11} &~\defeq \exists x_{1}, x_{2}, x_{3}, x_{4}, x_{5}, x_{6}, x_{7}, x_{8}.\\
 &~
 \begin{pmatrix}
 \begin{aligned}
    &~ 10 x_1x_2 + 2 x_2x_3 + 4 x_3x_4 + 6 \ge 0 \\
    \wedge&~ -x_1x_3 -9 x_2x_3 + 9 x_2x_5 + 9 \ge 0 \\
    \wedge&~ 3 x_1x_6 -7 x_2x_4 -2 x-6^{2} -3 \ge 0 \\
    \wedge&~ -5 x_2x_6 + 7 x_2x_7 + 8 x-4x6 + 4 x_4 -10 x_7 -9 \ge 0 \\
    \wedge&~ -6 x_1x_4 -3 x_3x_4 + 3 x_4x_8 + 2 x_1 -10 \ge 0 \\
    \wedge&~ 7 x_2^{2} -2 \ge 0 \\
 \end{aligned}
 \end{pmatrix}
\end{align*}

\begin{align*}
 \Phi_{12} &~\defeq \exists x_{1}, x_{2}, x_{3}, x_{4}, x_{5}, x_{6}, x_{7}, x_{8}, x_{9}.\\
 &~
 \begin{pmatrix}
 \begin{aligned}
    &~ -x_1x_4 + 5 x_1^{2} + x_2x_3 + 9 \ge 0 \\
    \wedge&~ -8 x_1^{2} -5 x_2^{2} -8 x_3x_5 -10 x_3 + 7 \ge 0 \\
    \wedge&~ -9 x_1x-2 -6 x_1x_5 -3 x_6^{2} -9 x_1 -9 \ge 0 \\
    \wedge&~ 7 x_1x_7 + 10 x_2x_5 + 4 x_2^{2} -3 \ge 0 \\
    \wedge&~ -8 x_1x_2 + 7 x_5x_8 + 8 \ge 0 \\
    \wedge&~ 9 x_3x_5 + 8 x_3x_9 + 2 x_1 -5 \ge 0 \\
    \wedge&~ 2 x_1^{2} -2 x_2^{2} + 5 x_5x_7 + 1 \ge 0 \\
    \wedge&~ x_2x_4 + 1 \ge 0 \\
    \wedge&~ -3 x_3^{2} + 5 \ge 0 \\
    \wedge&~ -3 x_1x_2 -6 x_1^{2} -4 \ge 0 \\
 \end{aligned}
 \end{pmatrix}
\end{align*}

\end{document}